\newtheorem{thrm}{Theorem}
\newcommand{\tf}{\ensuremath{\mathit{tf}}}
\newcommand{\df}{\ensuremath{\mathit{df}}}
\newcommand{\idf}{\ensuremath{\mathit{idf}}}
\newcommand{\rc}{\ensuremath{\mathit{range\_count}}}
\newcommand{\rr}{\ensuremath{\mathit{range\_report}}}
\newcommand{\rqq}{\ensuremath{\mathit{range\_quantile}}}
\newcommand{\rnv}{\ensuremath{\mathit{range\_next\_value}}}
\newcommand{\rint}{\ensuremath{\mathit{range\_intersect}}}
\newcommand{\dlist}{\ensuremath{\mathit{doc\_listing}}}
\newcommand{\dfreq}{\ensuremath{\mathit{doc\_frequency}}}
\newcommand{\dint}{\ensuremath{\mathit{doc\_intersect}}}
\newcommand{\Tag}{\ensuremath{\mathit{Tag}}}
\newcommand{\expand}{\ensuremath{\mathit{expand}}}
\newcommand{\up}{\vspace*{-0.4cm}}
\newcommand{\OUTPUT}{\STATE \textbf{output~}}
\newcommand{\RET}{\STATE \textbf{return~}}
\newcommand{\dcc}[1]{}
\newcommand{\home}[1]{#1}
\newcommand{\Oh}[1]
    {\ensuremath{\mathcal{O} \hspace{-.5ex} \left( {#1} \right)}}
\def\+{\!+\!}
\def\-{\!-\!}
\def\pref(#1,#2){$#1$ is a prefix of $#2$}
\def\suff(#1,#2){$#1$ is a suffix of $#2$}
\def\reg(#1,#2){$#2$ is $#1$-regular}
\def\notreg(#1,#2){$#2$ is not $#1$-regular}
\def\eqref#1{(\ref{#1})}
\newlength{\onedigit}
\newlength{\onecomma}
\begin{document}

\title{New Algorithms on Wavelet Trees \\ and Applications to Information
	Retrieval
	\thanks{Early parts of this work appeared in SPIRE 2009 \cite{GPT09}
		and SPIRE 2010 \cite{NP10}.}}


\author{
Travis Gagie\inst{1}
\and
Gonzalo Navarro\inst{2}\thanks{Partially supported by Fondecyt Grant 1-080019,
			Chile.}
\and
Simon J. Puglisi\inst{3}\thanks{Partially supported by the Australian Research
Council.}
}

\institute{
    Department of Computer Science,
    Aalto University, Finland.
    \email{travis.gagie@gmail.com} \\[1ex]
\and
    Department of Computer Science,
    University of Chile, Chile.
    \email{gnavarro@dcc.uchile.cl} \\[1ex]
 \and
    School of Computer Science and Information Technology,
    Royal Melbourne Institute of Technology, Australia.
    \email{\{simon.puglisi\}@rmit.edu.au}
}


\maketitle \thispagestyle{empty}

\begin{abstract}
Wavelet trees are widely used in the representation of sequences, permutations,
text collections, binary relations, discrete points, and other succinct data
structures. We show, however, that this still falls short of exploiting all of
the virtues of this versatile data structure. In particular we show how to
use wavelet trees to solve fundamental algorithmic problems such as
{\em range quantile} queries, {\em range next value} queries, and {\em range
intersection} queries. We explore several applications of these queries in
Information Retrieval, in particular {\em document retrieval} in
hierarchical and temporal documents, and in the representation of {\em
inverted lists}.
\end{abstract}

\section{Introduction}

The {\em wavelet tree} \cite{GGV03} is a versatile data structure that stores a
sequence $S[1,n]$ of elements from a symbol universe $[1,\sigma]$ within
asymptotically the same space required by a plain representation of the
sequence, $n\log\sigma\,(1+o(1))$ bits.%
\footnote{Our logarithms are in base 2 unless otherwise stated. Moreover,
within a time complexity, $\log x$ should be understood as $\max(1,\log x)$.}
Within that space, the wavelet tree is able to return any sequence
element $S[i]$, and also to answer two queries on $S$ that are fundamental in
compressed data structures for text retrieval:
\begin{eqnarray*}
rank_c(S,i) &~=~& \textrm{number of occurrences of symbol }c\textrm{ in }S[1,i], \\
select_c(S,j) &~=~& \textrm{position of the }j\textrm{th occurrence of symbol
}c\textrm{ in }S.
\end{eqnarray*}

The time for these three queries is $\Oh{\log \sigma}$.\footnote{This can be
reduced to $\Oh{1+\frac{\log\sigma}{\log\log n}}$ \cite{fmmn2007} using
multiary wavelet trees, but these do not merge well with the new algorithms we
develop in this article.}
Originally designed for compressing suffix arrays \cite{GGV03}, the usefulness
of the wavelet tree for many other scenarios was quickly realized. It was soon
adopted as a fundamental component of a large class of compressed
text indexes, the FM-index family, giving birth to most of its modern variants
\cite{FMMN04,MN05,fmmn2007,MN07implicit}.

The connection between the wavelet tree and an old geometric structure by
Chazelle \cite{Cha88} made it evident that wavelet trees could be used for {\em range
counting and reporting} points in the plane. More formally, given a set of $t$ points
$P=\{(x_i,y_i),~1\le i\le t\}$ on a discrete grid $[1,n] \times [1,\sigma]$, wavelet trees
answer the following basic queries:
\begin{eqnarray*}
\rc(P,x^s,x^e,y^s,y^e) &~=~& \textrm{number of pairs }(x_i,y_i)\textrm{
such that }x^s\le x_i\le x^e,~y^s\le y_i\le y^e, \\
\rr(P,x^s,x^e,y^s,y^e) &~=~& \textrm{list of those pairs }(x_i,y_i)
\textrm{ in some order,}
\end{eqnarray*}
both in $\Oh{\log\sigma}$ time \cite{MN06}.%
\footnote{Again, this can be reduced to $\Oh{1+\frac{\log \sigma}{\log\log n}}$
using multiary wavelet trees \cite{BHMM09}.}
These new capabilities were subsequently used to design powerful succinct
representations of two-dimensional point grids \cite{MN06,BHMM09,BLNS10},
permutations \cite{BN09}, and binary relations \cite{BCN10}, with applications
to other compressed text indexes \cite{Nav03,chsv2008,CN09}, document retrieval
problems \cite{vm2007} and many others.

In this paper we show, by uncovering new capabilities, that the full potential of wavelet trees is far from
realized. We show that the wavelet tree allows us to
solve the following fundamental queries:
\begin{eqnarray*}
\rqq(S,i,j,k) &~=~& k\textrm{th smallest value in }S[i,j], \\
\rnv(S,i,j,x) &~=~& \textrm{smallest }S[r]\ge x
				\textrm{ such that } i\le r\le j, \\
\rint(S,i_1,j_1,\ldots,i_k,j_k) &~=~& \textrm{distinct common values in }S[i_1,j_1], S[i_2,j_2], \ldots, S[i_k,j_k].
\end{eqnarray*}

The first two are solved in time $\Oh{\log\sigma}$, whereas the cost of the
latter is $\Oh{\log\sigma}$ per delivered value plus the size of the
intersection of the tries that describe the different values in $S[i_1,j_1]$
and $S[i_2,j_2]$. A crude upper bound for the latter is $\Oh{\min(\sigma,
j_1-i_1+1,j_2-i_2+1)}$, however, we give an adaptive analysis of our method,
showing it requires $O(\alpha\log \frac{\sigma}{\alpha})$ time, where $\alpha$
is the so-called {\em alternation complexity} of the problem~\cite{BK02}.

All these algorithmic problems are well known. Har-Peled and
Muthukrishnan~\cite{HM08} describe applications of range median queries
(a particular case of $\rqq$) to the analysis of Web advertizing
logs. Stolinski et al.~\cite{SGB10} use them for noise reduction in grey scale
images. Similarly, Crochemore et al.~\cite{CIR07} use $\rnv$
queries for interval-restricted pattern matching, and Keller et
al.~\cite{KKL07} and Crochemore et al.~\cite{CIKRW08} use them for many other
sophisticated pattern matching problems. Hon et al.~\cite{HSTV10} use
$\rint$ queries for generalized document retrieval, and
in a simplified form the problem also appears when processing conjunctive
queries in inverted indexes.


We further illustrate the importance of these fundamental algorithmic problems
by uncovering new applications in several Information Retrieval (IR) activities.
We first consider {\em document retrieval} problems on general sequences. This
generalizes the classical IR problems usually dealt with on Natural Language
(NL), and defines them in a more general setting where one has a collection $C$
of strings (i.e., the documents), and queries are strings as well. Then one is
interested in any substring of the collection that matches the query, and the
following IR problems are defined (among several others):
\begin{eqnarray*}
\dlist(q) &~=~& \textrm{distinct documents where query }q\textrm{ appears}, \\
\dfreq(q,d)&~=~& \textrm{number of occurrences of query }q\textrm{ in document }d,\\
\dint(q_1,\ldots,q_k) &~=~& \textrm{distinct documents where all queries
}q_1,\ldots,q_k\textrm{ appear.}
\end{eqnarray*}

These generalized IR problems have applications in text databases where the
concept of {\em words} does not exist or is difficult to define, such as in
Oriental languages, DNA and protein sequences, program code, music and other
multimedia sequences, and numeric streams in general. The interest in carrying
out IR tasks on, say, Chinese or Korean is obvious despite the difficulty of
automatically delimiting the words. In those cases one resorts to a model
where the text is seen as a sequence of symbols and must be able to retrieve
any substring. Agglutinating languages such as Finnish or German present
similar problems to a certain degree. While indexes for plain string matching
are well known, supporting more sophisticated IR tasks such as ranked document
retrieval is a very recent research area. It is not hard to imagine that
similar capabilities would be of interest in other types of sequences: for
example listing the functions where two given variables are used
simultaneously in a large software development system, or ranking a set of
gene sequences by the number of times a given substring marker occurs.

By constructing a {\em suffix array} $A$ \cite{MM93} on the text collection, one
can obtain in time $\Oh{|q|\log|C|}$ the range of $A$ where all the occurrence
positions
of $q$ in $C$ are listed. The classical solution to document retrieval problems
\cite{Mut02} starts by defining a {\em document array} $D$ giving the document
to which each suffix of $A$ belongs. Then problems like document listing boil
down to listing the distinct values in a range of $D$, and intersection of
documents becomes the intersection of values in a range of $D$. Both are
solved with our new fundamental algorithms (the former with range quantile
queries). Other queries such as computing frequencies reduce to a pair of
$rank_d$ queries on $D$.

Second, we generalize document retrieval problems to other scenarios. The first scenario
is {\em temporal} documents, where the document numbers are consistent
with increasing version numbers of the document set. Then one is interested in
restricting the above queries to a given interval of time (i.e., of
document numbers). A similar case is that of {\em hierarchical} documents,
which contain each other as in the case of an XML collection or a file system.
Here, restricting the query to a range of document numbers is equivalent to
restricting it to a subtree of the hierarchy. However, one can consider more
complex queries in the hierarchical case, such as marking a set of
{\em retrievable} nodes at query time and carrying out the operations with
respect to those nodes. We show how to generalize our algorithms to handle
this case as well.

Finally, we show that variants of our new fundamental algorithms are useful to
enhance the functionality of {\em inverted lists}, the favorite data structures
for both {\em ranked} and {\em full-text retrieval} in NL. Each of these
retrieval paradigms requires a different variant of the inverted list, and one
has to maintain both in order to support all the activities usually required in
an IR system. We show
that a wavelet tree representation of the inverted lists supports not only the
basic functionality of both representations within essentially the space of
one, but also several enhanced functionalities such as on-the-fly stemming and
restriction of documents, and most list intersection algorithms.

The article is structured as follows. In Section~\ref{sec:wt} we review the
wavelet tree data structure and its basic algorithmics. Section~\ref{sec:ir}
reviews some basic IR concepts. Then Section~\ref{sec:alg} describes the new
solutions to fundamental algorithmic problems, whereas
Sections~\ref{sec:doclist} and \ref{sec:invlists} explore applications to
various IR problems. Finally we conclude in Section~\ref{sec:concl}.

\section{Wavelet Trees}
\label{sec:wt}

A {\em wavelet tree} $T$ \cite{GGV03} for a sequence $S[1,n]$ over an ordered
alphabet $[1,\sigma]$ is an ordered, strictly
binary tree whose leaves are labeled with the distinct symbols in $S$ in
order from left to right, and whose internal nodes $T_v$ store binary strings
$B_v$. The binary string at the root contains $n$ bits and each is set to 0 or
1 depending on whether the corresponding character of $S$ is the label of a
leaf in $T$'s left or right subtree. For each internal node $v$ of $T$, the
subtree $T_v$ rooted at $v$ is itself a wavelet tree for the {\em subsequence}
$S_v$ of $S$ consisting of the occurrences of its leaf labels in $T_v$.
For example, if \(S = \mathsf{abracadabra}\) and the
leaves in $T$'s left subtree are labeled {\sf a}, {\sf b} and {\sf c}, then
the root stores \(00100010010\), the left subtree is a wavelet tree for
{\sf abacaaba} and the right subtree is a wavelet tree for {\sf rdr}.

In this article we consider balanced wavelet trees, where the number of
leaves to the left and to the right of each node differ at most by 1.
The important properties of such a wavelet tree for our purposes are summarized
in the following lemma.

\begin{lemma} \label{lem:wt}
The wavelet tree $T$ for a sequence $S[1,n]$ on alphabet $[1,\sigma]$ with
$u$ distinct symbols requires at most $n\log\sigma + \Oh{n}$ bits of space,
and can be constructed in $\Oh{n\log u}$ time.
\end{lemma}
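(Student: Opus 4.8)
The plan is to prove the space bound and the construction time separately.

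For the space bound, I would first analyze the space taken by the bitvectors $B_v$ stored at the internal nodes. The key observation is that each of the $n$ symbols of $S$ contributes exactly one bit to the bitvector at each internal node on the root-to-leaf path leading to that symbol's leaf. Since the tree is balanced, every root-to-leaf path has length at most $\lceil\log u\rceil$, where $u$ is the number of distinct symbols (equivalently, the number of leaves). Summing over all positions, the total number of bits across all $B_v$ is at most $n\lceil\log u\rceil \le n\log u + n \le n\log\sigma + \Oh{n}$. The subtlety I would be careful about is the $o(1)$ factor hidden in the rank/select machinery: to support the $rank$ and $select$ operations in $\Oh{1}$ per node, each bitvector of length $m$ is stored with a succinct index occupying $m + o(m)$ bits (or $m\,(1+o(1))$ bits). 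Summed over all nodes this contributes only $\Oh{n}$ additional bits, which is absorbed into the stated bound. I would also account for the pointers and the tree topology: a strictly binary tree with $u$ leaves has $\Oh{u}=\Oh{\sigma}=\Oh{n}$ nodes (after discarding degenerate cases), so the constant-per-node overhead for storing offsets and pointers is $\Oh{n}$ as well. Collecting these pieces gives the claimed $n\log\sigma + \Oh{n}$ bits.

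For the construction time, the natural approach is a top-down recursive procedure. At the root I scan $S$ once in $\Oh{n}$ time, writing the root bitvector $B_{\mathit{root}}$ and simultaneously distributing each symbol into one of two buffers according to whether its leaf lies in the left or right subtree; the threshold separating the two halves of the alphabet is determined from the sorted distinct symbols. This partitioning, together with building the rank/select index on the $\Oh{n}$-bit bitvector, costs $\Oh{n}$ at the root. Recursing, each level of the tree processes a total of at most $n$ symbols (the subsequences $S_v$ at a fixed depth partition $S$), so each of the $\Oh{\log u}$ levels costs $\Oh{n}$, yielding $\Oh{n\log u}$ overall.

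The main obstacle I anticipate is handling the alphabet mapping cleanly so that the partition threshold at each node can be computed without an extra $\log\sigma$ factor creeping into either the time or the space. Concretely, I would first sort the distinct symbols once, in $\Oh{u\log u}=\Oh{n\log u}$ time (or assume the alphabet is already the effective range $[1,u]$), and thereafter work with the rank of each symbol among the distinct symbols rather than its raw value; then the split at any node is simply a comparison against a midpoint index, computable in $\Oh{1}$. With this preprocessing in place, the level-by-level accounting above goes through directly, and the construction time is $\Oh{n\log u}$ as claimed. $\Box$
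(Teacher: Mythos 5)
There is a genuine gap in your space accounting: you never charge for storing the \emph{leaf labels}, i.e.\ the mapping between the $u$ distinct symbols of $S$ and their ranks. Your construction explicitly replaces each symbol by its rank among the distinct symbols, but then the tree, by itself, can only answer queries about the rank-reduced sequence; to make it represent $S$ over the original alphabet $[1,\sigma]$ (so that $access$ returns actual values and $rank_c/select_c$ accept actual symbols $c$), the set of distinct values must be stored and be searchable. This is exactly where the lemma is delicate when $\sigma > n$. Storing the sorted distinct values naively costs $u\lceil\log\sigma\rceil$ bits, which breaks the bound: take $u=n$ and $\sigma=n^2$, so the bitmaps cost $n\log n$ bits and the values $2n\log n$ bits, for a total of about $3n\log n$, whereas the claim is $n\log\sigma+\Oh{n}=2n\log n+\Oh{n}$. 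The paper's proof closes this by splitting into cases: if $\sigma\le n$ it simply builds the tree over all of $[1,\sigma]$ (so no mapping is needed and the bound $n\log\sigma$ is immediate); if $\sigma>n$ it stores a compressed bitmap $U[1,\sigma]$ marking the occurring values, in $u\log\frac{\sigma}{u}+\Oh{u}$ bits with constant-time $select$, and uses the inequality $n\log u + u\log\frac{\sigma}{u} \le n\log\sigma$ (valid because $u\le n$). Your proof contains no analogue of this step.

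A second, smaller flaw: you claim the tree topology costs ``constant-per-node overhead for storing offsets and pointers,'' hence $\Oh{n}$ bits. A pointer or offset takes $\Theta(\log n)$ bits, and there can be $\Theta(u)=\Theta(n)$ nodes, so explicit per-node pointers cost $\Theta(n\log n)$ bits, again exceeding the $\Oh{n}$ extra-space budget. (Also, your chain $\Oh{u}=\Oh{\sigma}=\Oh{n}$ is not valid when $\sigma\gg n$; what is true is $u\le\min(n,\sigma)$.) The paper avoids pointers altogether by concatenating all bitmaps at the same depth into one levelwise bitmap and recovering node intervals on the fly with $rank$ operations, which needs only one pointer per level, $\Oh{\log u\log n}=o(n)$ bits. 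Your construction-time argument (one $\Oh{n}$ scan per level, $\Oh{\log u}$ levels, plus identifying and sorting the distinct symbols) is essentially the paper's and is fine.
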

\begin{proof}
By the description above the wavelet tree has height $\lceil \log u \rceil$ and
can be easily built in time $\Oh{n\log u}$ (we need to determine the $u \le
\min(n,\sigma)$ distinct values first, but this is straightforward within the
same complexity).

As for the space, note that the wavelet tree stores only the
bitmaps $B_v$ for all the nodes. The total length of the binary strings is at
most $n$ at each level of the wavelet tree, which adds up to
$n \lceil \log u \rceil$. Apart from the bitmaps, there is the binary tree of
$\Oh{u}$ nodes. Instead of storing the nodes, one can concatenate all the
bitmaps of the same depth and simulate the nodes \cite{MN06}, so this requires
just one pointer per level, $\Oh{\log u \log n} = o(n)$ bits.

The distinct values must be stored as well.
Indeed, if $\sigma \le n$, we can just assume all the $\sigma$
values exist and the wavelet tree will have $\lceil \log\sigma\rceil$ levels
and the theorem holds. Otherwise, we can mark the unique values in a bitmap
$U[1,\sigma]$, which can be stored in compressed form \cite{OS07} so that it
requires $u\log\frac{\sigma}{u}+\Oh{u}$ bits and the $i$th distinct number is
retrieved as $select_1(U,i)$ in constant time\footnote{For this, one has to
use a constant-time data structure for $select$ \cite{Mun96}
in their internal bitmap $H[1,2u]$ \cite{OS07}.}. Adding up all
the spaces we get $n\log u + \Oh{n} + u\log\frac{\sigma}{u}+\Oh{u} \le
n\log\sigma + \Oh{n}$ bits, and the construction time is $\Oh{u}$.

Finally, we can represent the bitmaps with data structures that support
constant-time (binary) $rank$ and $select$ operations \cite{Pat08}. The overall
extra space stays within $\Oh{n}$ bits and the construction time within
$\Oh{n\log u}$. Binary $rank$ and $select$ operations are essential to operate
on the wavelet trees, as seen shortly.
\qed
\end{proof}

The most basic operation of $T$ is to replace $S$, by retrieving any $S[i]$
value in $\Oh{\log u}$ time. The algorithm is as follows. We first examine
the $i$th bit of the root bitmap $B_{root}$. If $B_{root}[i]=0$, then symbol
$S[i]$ corresponds to a leaf descending by the left child of the root, and by
the right otherwise. In the first case we continue recursively on the left
child, $T_l$. However, position $i$ must now be mapped to the subsequence
handled at $T_l$. Precisely, if the 0 at $B_{root}[i]$ is the $j$th 0 in
$B_{root}$, then $S[i]$ is mapped to $S_v[j]$. In other words, when we
go left, we must recompute $i \leftarrow rank_0(B_{root},i)$. Similarly,
when we go right we set $i \leftarrow rank_1(B_{root},i)$.

When the tree nodes are not explicit, we find out the intervals corresponding
to $B_v$ in the levelwise bitmaps as follows. $B_{root}$ is a single bitmap.
If node $v$ has depth $d$, and $B_v$ corresponds to interval $B_d[l,r]$, then
its left child corresponds to $B_{d+1}[l,k]$ and its right child to
$B_{d+1}[k+1,r]$, where $k = rank_0(B_d,r)-rank_0(B_d,l-1)$ \cite{MN06}.

The wavelet tree can also answer $rank_c(S,i)$ queries on $S$ with a mechanism
similar to that for retrieving $S[i]$. This time one decides whether to go
left or right depending on which subtree of the current node the leaf labeled
$c$ appears in, and not on the bit values of $B_v$. The final $i$ value when
one reaches the leaf is the answer. Again, the process requires $\Oh{\log u}$ time.

Finally, $select_c(S,j)$ is also supported in $\Oh{\log u}$ time using
the wavelet tree. This time we start from position $j$ at the leaf labeled
$c$;\footnote{If the tree nodes are not explicitly stored then we first
descend to the node labeled $c$ in order to delimit the interval corresponding
to the leaf and to all of its ancestors in the levelwise bitmaps.}
this indeed corresponds to the $j$th occurrence of symbol $c$ in $S$.
If the leaf is a left child of its parent $v$, then the position of that $c$
in $S_v$ is $select_0(B_v,j)$, and $select_1(B_v,j)$ if the leaf is a right
child of $v$. We continue recursively from this new $j$ value until reaching
the root, where $j$ is the answer.

\begin{algorithm}[t]
\caption{Basic wavelet tree algorithms: On the wavelet tree of sequence $S$,
$\mathbf{access}(v_{root},i)$ returns $S[i]$; $\mathbf{rank}(v_{root},c,i)$
returns $rank_c(S,i)$; and $\mathbf{select}(v_{root},c,i)$ returns
$select_c(S,i)$.}
\label{alg:accrnksel}
\begin{tabular}{ccc}
\begin{minipage}{0.33\textwidth}
$\mathbf{access}(v,i)$
\begin{algorithmic}
\IF{$v$ is a leaf}
   \RET $label(v)$
\ELSIF{$B_v[i]=0$}
   \RET $\mathbf{access}(v_l,rank_0(B_v,i))$
\ELSE
   \RET $\mathbf{access}(v_r,rank_1(B_v,i))$
\ENDIF
\home{\STATE}
\home{\STATE}
\end{algorithmic}
\end{minipage}
&
\begin{minipage}{0.33\textwidth}
$\mathbf{rank}(v,c,i)$
\begin{algorithmic}
\IF{$v$ is a leaf}
   \RET $i$
\ELSIF{$c \in labels(v_l)$}
   \RET $\mathbf{rank}(v_l,c,rank_0(B_v,i))$
\ELSE
   \RET $\mathbf{rank}(v_r,c,rank_1(B_v,i))$
\ENDIF
\home{\STATE}
\home{\STATE}
\end{algorithmic}
\end{minipage}
&
\begin{minipage}{0.33\textwidth}
$\mathbf{select}(v,c,i)$
\begin{algorithmic}
\IF{$v$ is a leaf}
   \RET $i$
\ELSIF{$c \in labels(v_l)$}
   \RET $select_0(B_v,\mathbf{select}(v_l,c,i))$
\ELSE
   \RET $select_1(B_v,\mathbf{select}(v_r,c,i))$
\ENDIF
\end{algorithmic}
\end{minipage}
\end{tabular}
\end{algorithm}

Algorithm~\ref{alg:accrnksel} gives pseudocode for the basic $access$, $rank$ and
$select$ algorithms on wavelet trees. For all the pseudocodes in this article
we use the following notation: $v$ is a wavelet tree node and $v_{root}$ is the
root node. If $v$ is a leaf then its symbol is $labels(v) \in [1,\sigma]$.
Otherwise $v_l$ and $v_r$ are its left and right children, respectively, and
$B_v$ is its bitmap. For all nodes, $labels(v)$ is the range of leaf labels
that descend from $v$ (a singleton in case of leaves).

\begin{algorithm}[t]
\caption{Range algorithms: $\mathbf{count}(v_{root},x^s,x^e,[y^s,y^e])$ returns
$\rc(P,x^s,x^e,y^s,y^e)$ on the wavelet tree of sequence $P$; and
$\mathbf{report}(v_{root},x^s,x^e,[y^s,y^e])$ outputs all pairs $(y,f)$, where
$y^s \le y \le y^e$ and $y$ appears $f>0$ times in $P[x^s,y^s]$.}
\label{alg:range}
\begin{tabular}{cc}
\begin{minipage}{0.5\textwidth}
$\mathbf{count}(v,x^s,x^e,rng)$
\begin{algorithmic}
\IF{$x^s > y^s ~\lor~ labels(v) \cap rng = \emptyset$}
	\RET 0
\ELSIF{$label(v) \subseteq rng$}
	\RET $x^e-x^s+1$
\ELSE
	\STATE $x^s_l \leftarrow rank_0(B_v,x^s-1)+1$
	\STATE $x^e_l \leftarrow rank_0(B_v,x^e)$
	\STATE $x^s_r \leftarrow x^s-x^s_l$, $x^e_r \leftarrow x^e-x^e_l$
	\RET $\mathbf{count}(v_l,x^s_l,x^e_l,rng) +$
	\STATE \hspace{1.1cm} $\mathbf{count}(v_r,x^s_r,x^e_r,rng)$
\ENDIF
\end{algorithmic}
\end{minipage}
&
\begin{minipage}{0.5\textwidth}
$\mathbf{report}(v,x^s,x^e,rng)$
\begin{algorithmic}
\IF{$x^s > y^s ~\lor~ labels(v) \cap rng = \emptyset$}
	\RET
\ELSIF{$v$ is a leaf}
	\OUTPUT $(label(v),x^e-x^s+1)$
\ELSE
	\STATE $x^s_l \leftarrow rank_0(B_v,x^s-1)+1$
	\STATE $x^e_l \leftarrow rank_0(B_v,x^e)$
	\STATE $x^s_r \leftarrow x^s-x^s_l$, $x^e_r \leftarrow x^e-x^e_l$
	\STATE $\mathbf{report}(v_l,x^s_l,x^e_l,rng)$
	\STATE $\mathbf{report}(v_r,x^s_r,x^e_r,rng)$
\ENDIF
\end{algorithmic}
\end{minipage}
\end{tabular}
\end{algorithm}

As we make use of $\rc$ and a form of $\rr$ queries in this
article, we give pseudocode for them as well, in Algorithm~\ref{alg:range}.
Indeed, $\rc$ is a kind of multi-symbol $rank$ and $\rr$ is
a kind of multi-symbol $access$.

In Section~\ref{sec:alg} we develop new algorithms based on wavelet trees to
solve fundamental algorithmic problems. We prove now a few simple lemmas
that are useful for analyzing $\rc$ and $\rr$, as well as
many other algorithms we introduce throughout the article. Most results are
folklore but we reprove them here for completeness.

\begin{lemma} \label{lem:cover}
Any contiguous range of $\ell$ leaves in a wavelet tree is the set of
descendants of $\Oh{\log \ell}$ nodes.
\end{lemma}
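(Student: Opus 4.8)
The plan is to argue that a contiguous range of leaves decomposes into $O(\log\ell)$ maximal subtrees, by bounding how many such subtrees can appear at each level of the wavelet tree. The key structural fact is that the leaves descending from any node form a contiguous range (since leaves are labeled left-to-right and the tree is strictly binary), so a contiguous range of $\ell$ leaves is naturally covered by a set of \emph{maximal} nodes: each maximal node is one whose entire leaf-subtree lies inside the range, but whose parent's subtree does not. I would take exactly this set of maximal nodes as the $O(\log\ell)$ nodes claimed, and the whole task reduces to counting them.

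First I would fix the range and, for each depth $d$ of the tree, consider the maximal nodes at that depth. The central claim is that \emph{at most two} maximal nodes occur at any single depth. To see this, observe that the subtrees rooted at the nodes of a fixed depth partition the leaf set into contiguous blocks in left-to-right order. A node at depth $d$ is maximal (and part of our cover) only if its block lies entirely within the range but is not absorbed into a larger block at depth $d-1$; this last condition forces the maximal node's parent to straddle a boundary of the range. Since the range $[a,b]$ has only two boundaries (its left end $a$ and its right end $b$), and each straddling parent accounts for at most one maximal child on the left-boundary side and one on the right-boundary side, there can be at most two maximal nodes per level: one hanging off the left boundary and one off the right. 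I would make this precise by a left/right charging argument, associating each maximal node at depth $d$ with the boundary of the range that its parent crosses.

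Given the per-level bound of two, the total count of maximal nodes is at most $2\lceil\log u\rceil = O(\log\ell)$ once I note that the relevant depths are only those down to where subtree sizes are comparable to $\ell$; more simply, summing the bound of $2$ over the $O(\log\ell)$ levels at which maximal nodes can possibly appear yields the result. An alternative and cleaner route, which I would likely present instead, is the standard recursive canonical-decomposition argument used for segment trees and for range queries on balanced binary trees: descend from the root, and whenever the current node's leaf-block is entirely inside or entirely outside the range, stop; otherwise recurse into both children. The nodes at which we stop ``inside'' are exactly the maximal cover nodes, and a routine induction shows that along each of the two root-to-boundary search paths at most one extra node is output per level, giving $O(\log\ell)$ in total.

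The main obstacle is getting the level-wise bound of two stated and justified rigorously rather than just intuitively: one must be careful that a node counts as maximal precisely when it is inside the range but its parent is not, and argue cleanly that this ``boundary-crossing parent'' condition can be satisfied by at most one node on each side at each level. A secondary subtlety is pinning down the correct height to use in the summation so that the bound comes out as $O(\log\ell)$ rather than $O(\log u)$: since a range of $\ell$ leaves cannot produce maximal nodes at depths where the whole subtree already exceeds the range, the number of productive levels is $O(\log\ell)$, and this is what delivers the stated dependence on $\ell$ rather than on the full alphabet size.
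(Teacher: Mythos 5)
Your proposal is correct, but it proves the lemma by a genuinely different route than the paper. You work top-down: you take the canonical cover by \emph{maximal} nodes (subtree inside the range, parent's subtree not), bound these by two per level via a boundary-charging argument, and then—crucially—invoke the balancedness of the tree (a node at depth $d$ has about $u/2^d$ leaves) to argue that maximal nodes can only occur in the bottom $\Oh{\log\ell}$ levels, which is what turns the naive $\Oh{\log u}$ count into $\Oh{\log\ell}$. The paper instead works bottom-up: start from the $\ell$ leaves, replace each consecutive pair sharing a parent by that parent (at most two nodes per round are left unpaired and join the cover, and at most half survive as parents), and repeat; since the active set roughly halves each round, the process ends after $\lceil\log\ell\rceil$ rounds, giving at most two cover nodes per level and $\Oh{\log\ell}$ in total. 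The trade-off: the paper's halving argument gets the dependence on $\ell$ rather than $u$ for free, with no need to reason about subtree sizes at a given depth, so it is shorter and more elementary; your argument is structurally sharper—it identifies exactly which nodes form the cover (the standard segment-tree canonical decomposition), and it makes explicit where balance is used, which is a real hypothesis (on a fully skewed tree the lemma is false, since a middle range of $\ell$ leaves needs $\Theta(\ell)$ cover nodes). One small wording issue to fix when you formalize: a parent straddling the left boundary can have only its \emph{right} child maximal (and symmetrically), and a parent straddling both boundaries has none, so each straddling parent contributes at most one maximal child—your phrasing ``one on the left-boundary side and one on the right-boundary side'' per parent momentarily suggests four per level before you settle on the correct bound of two.
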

\begin{proof}
Start with the $\ell$ leaves. For each consecutive pair that shares the same
parent, replace the pair by their parent. At most two leaves are not replaced,
and at most $\ell/2$ parents are created. Repeat the operation at the parent
level, and so on. After working on $\lceil\log\ell\rceil$ levels, we have at
most two nodes per wavelet tree level, for a total of $\Oh{\log\ell}$ nodes
covering the original interval.
\end{proof}

\begin{lemma} \label{lem:nodeset}
Any set of $r$ nodes in a wavelet tree of $u$ leaves has at most
$\Oh{r \log\frac{u}{r}}$ ancestors.
\end{lemma}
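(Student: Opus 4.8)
The plan is to exploit the fact that a balanced wavelet tree is narrow near its root: although each of the $r$ nodes can have up to $\lceil\log u\rceil$ ancestors, the ancestors lying in the topmost levels cannot be numerous, simply because those levels contain few nodes in total. I would therefore split the set of all ancestors by depth, at a threshold chosen around $\log r$, and bound the two parts by different arguments.

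Concretely, let $h=\lceil\log u\rceil$ be the height of the tree, and recall that a balanced binary tree has at most $2^d$ nodes at depth $d$, with the root at depth $0$. Fix the threshold depth $d^\star=\lfloor\log r\rfloor$. First I would bound the ancestors lying at depths $0,1,\ldots,d^\star-1$: since these levels together contain at most $\sum_{d=0}^{d^\star-1}2^d < 2^{d^\star}\le r$ nodes in the \emph{entire} tree, the number of distinct ancestors contributed from the top is $\Oh{r}$, regardless of which $r$ nodes were selected. Second, for the remaining depths $d^\star,d^\star+1,\ldots,h$, each single node of the set has at most one ancestor per level, hence at most $h-d^\star+1 = \lceil\log u\rceil-\lfloor\log r\rfloor+1 = \Oh{\log\frac{u}{r}}$ ancestors in this range; summing over all $r$ nodes of the set yields $\Oh{r\log\frac{u}{r}}$ ancestors from the bottom.

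Adding the two contributions gives $\Oh{r}+\Oh{r\log\frac{u}{r}}$. Here I would invoke the paper's convention that $\log x$ abbreviates $\max(1,\log x)$, so that $\log\frac{u}{r}\ge 1$ and the first term is absorbed into the second, leaving the claimed $\Oh{r\log\frac{u}{r}}$ bound.

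The argument is a routine counting estimate, so I do not anticipate a genuine obstacle; the only point requiring care is the boundary regime where $r$ is close to (or exceeds) $u$, in which $\log\frac{u}{r}$ would otherwise be nonpositive. The $\max(1,\cdot)$ convention handles this cleanly: when $r\ge u/2$ the bound degrades to the trivial $\Oh{r}$, which is consistent with the tree having only $u-1$ internal nodes in total, so at most $\Oh{u}$ ancestors can ever exist.
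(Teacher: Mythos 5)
Your proof is correct and follows essentially the same route as the paper's: both split the ancestors at depth roughly $\log r$, bounding the top part by the total number of tree nodes at those levels ($\Oh{r}$) and the bottom part by one ancestor per node per level ($\Oh{\log\frac{u}{r}}$ each). If anything, your version is slightly cleaner, since the paper phrases the top/bottom split via an informal extremal-configuration claim (``they share the least if they diverge from depth $\lceil\log r\rceil$'') while you bound both parts directly, and you also make explicit the boundary case $r$ close to $u$ handled by the $\max(1,\cdot)$ convention.
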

\begin{proof}
Consider the paths from the root to each of the $r$ nodes. They cannot be
all disjoint. They share the least if they diverge from depth
$\lceil\log r\rceil$. In this case, all the $\Oh{r}$ tree nodes of depth
up to $\lceil\log r\rceil$ belong to some path, and from that depth each of
the $r$ paths is disjoint, adding at most
$\lceil\log u\rceil-\lceil\log r\rceil$ distinct ancestors. The total is
$\Oh{r + r\log\frac{u}{r}}$.
\end{proof}

\begin{lemma} \label{lem:noderange}
Any set of $r$ nodes covering a contiguous range of leaves in a wavelet tree
of $u$ leaves has at most $\Oh{r + \log u}$ ancestors.
\end{lemma}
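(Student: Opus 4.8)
The plan is to bound the ancestors by analysing the shape of the \emph{Steiner tree} $\tau$ formed as the union of the root-to-node paths of the $r$ covering nodes. Since $\tau$ is a connected subtree containing the root, every ancestor of a covering node lies in $\tau$, so it suffices to bound $|\tau|$. Let $a$ and $b$ be the leftmost and rightmost leaves of the covered range $[a,b]$. First I would use the fact that the $r$ nodes cover $[a,b]$ with disjoint subtrees (they form an antichain whose leaf sets tile $[a,b]$), so none is an ancestor of another and they are therefore exactly the \emph{leaves} of $\tau$. I then classify the internal nodes of $\tau$ as \emph{branching} (two children in $\tau$) or \emph{non-branching} (exactly one child in $\tau$) and bound each class separately.

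For the branching nodes the bound is immediate: a binary tree with $r$ leaves has at most $r-1$ nodes with two children, so $\tau$ contains $\Oh{r}$ branching nodes. The delicate part is bounding the non-branching nodes, and this is where the contiguity of the range is essential. The plan is to show that every non-branching node of $\tau$ is an ancestor of $a$ or of $b$, hence lies on one of the two root-to-leaf paths, which together contain only $2\lceil\log u\rceil = \Oh{\log u}$ nodes.

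The hard part will be this last claim, and it is the crux of the proof. Let $x$ be a non-branching node of $\tau$, with $\tau$-child $y$ and absent child $z$ (so $z\notin\tau$). Since $x\in\tau$, the subtree of $x$ contains a covering node and hence some leaf of $[a,b]$; since $z\notin\tau$, the subtree of $z$ contains no covering node. I would then argue that the subtree of $z$ must be disjoint from $[a,b]$: otherwise it would contain a leaf of $[a,b]$, which is covered by some covering node lying inside $z$'s subtree, forcing $z\in\tau$, a contradiction. Thus $x$ straddles an endpoint of the range --- its subtree meets $[a,b]$ through $y$ and meets the complement of $[a,b]$ through $z$ --- so $x$ is an ancestor of $a$ or of $b$. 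This yields $\Oh{\log u}$ non-branching nodes. Combining the three classes gives $|\tau| \le r + (r-1) + \Oh{\log u} = \Oh{r+\log u}$, and the bound on the number of ancestors follows.
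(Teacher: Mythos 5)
Your proof is correct, but it takes a genuinely different route from the paper's. You analyze the Steiner tree $\tau$ spanned by the $r$ covering nodes directly, classifying its nodes into leaves (exactly the $r$ covering nodes, by the antichain property), branching nodes (at most $r-1$ in a binary tree with $r$ leaves), and non-branching nodes, and the crux is your observation that contiguity forces every non-branching node to straddle an endpoint of the range and hence lie on the root-to-$a$ or root-to-$b$ path, giving $\Oh{\log u}$ of them. The paper instead argues arithmetically: it counts \emph{all} ancestors of the $\ell$ covered leaves level by level (fewer than $2+\ell/2^h$ marked nodes at height $h$, hence at most $2\ell+2\log u$ in total), then subtracts the $2\ell_i-1$ nodes of each disjoint covering subtree, leaving $2\ell+2\log u-(2\ell-r)+r=\Oh{r+\log u}$. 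Your argument buys a sharper structural picture --- it isolates exactly where contiguity is used, and it makes explicit the disjoint-tiling (antichain) hypothesis that the paper's subtraction step uses only implicitly, yet which is genuinely needed (for a non-disjoint ``cover,'' e.g.\ the root together with $r-1$ scattered deep nodes, the ancestor count can reach $\Theta(r\log\frac{u}{r})$, so the lemma as literally stated relies on this reading). The paper's argument buys brevity and stylistic consistency with the level-wise counting used in the surrounding lemmas. One small step you should tighten: when you argue that a leaf of $[a,b]$ inside $z$'s subtree forces a covering node ``inside $z$'s subtree,'' a priori that covering node could instead be an ancestor of $x$; this case is excluded because such a node would also be an ancestor of the covering node below $y$, contradicting the antichain property --- worth saying explicitly since it is the only other place the hypothesis is invoked.
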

\begin{proof}
We first count all the ancestors of the $\ell$ consecutive leaves covered and
then subtract
the sizes of the subtrees rooted at the $r$ nodes $v_1, v_2, \ldots, v_r$.
Start with $\ell$ leaves. Mark all the parents of the leaves. At most
$\lceil \ell/2\rceil < 1 + \ell/2$ distinct parents are marked, as most pairs
of consecutive leaves will share the same parent. Mark the parents of the
parents. At most $\lceil (1+\ell/2)/2\rceil < 3/2 + \ell/4$ parents of
parents are marked.
At height $h$, the number of marked nodes is always less than $2 + \ell/2^h$.
Adding over all heights, we have that the total number of ancestors is at
most $2\ell + 2\log u$. Now let $\ell_i$ be the number of leaves covered by node
$v_i$, so that $\sum_{1\le i\le r} \ell_i = \ell$. The subtree rooted at each
$v_i$ has $2\ell_i-1$ nodes. By subtracting those subtree sizes and adding back
the $r$ root nodes we get $2\ell + 2\log u - (2\ell-r)+r = \Oh{r+\log u}$.
\end{proof}

From the lemmas we conclude that $\mathbf{count}$ in Algorithm~\ref{alg:range}
(left) takes time $\Oh{\log u}$: it finds the $\Oh{\log (y^e-y^s+1)}$ nodes
that cover the range $[y^s,y^e]$ (Lemma~\ref{lem:cover}), by working in time
proportional to the number of ancestors of those nodes, $\Oh{\log (y^e-y^s+1)
+\log u} = \Oh{\log u}$ (Lemma~\ref{lem:noderange}). Interestingly,
$\mathbf{report}$ in Algorithm~\ref{alg:range} (right) can be analyzed in two
ways. On one hand, it takes time $\Oh{y^e-y^s+\log u}$ as it arrives at most
at the $y^e-y^s+1$ consecutive leaves and thus it works on all of their
ancestors (Lemma~\ref{lem:noderange}). On the other hand, if it outputs $r$
results (which are not necessarily consecutive), it also works proportionally
to the number of their ancestors, $\Oh{r\log\frac{u}{r}}$
(Lemma~\ref{lem:nodeset}). The latter is an {\em output-sensitive} analysis.
The following lemma shows that the cost is indeed
$\Oh{\log u + r\log\frac{y^e-y^s+1}{r}}$.

\begin{lemma} \label{lem:leafrangeset}
The number of ancestors of $r$ wavelet tree leaves chosen from $\ell$
contiguous leaves, on a wavelet tree of $u$ leaves, is $\Oh{\log u +
r\log\frac{\ell}{r}}$.
\end{lemma}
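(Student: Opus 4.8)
The plan is to split the ancestors of the $r$ chosen leaves into an \emph{upper} part, lying at or above a cover of the $\ell$ contiguous leaves, and a \emph{lower} part, lying strictly inside the subtrees of that cover, and to bound each part separately. First I would invoke Lemma~\ref{lem:cover} to cover the $\ell$ contiguous leaves by a set of $\Oh{\log\ell}$ nodes $C_1,\dots,C_m$ whose subtrees are pairwise disjoint and whose descendant leaves are exactly those $\ell$ leaves. Write $u_i$ for the number of leaves in the subtree rooted at $C_i$, so $\sum_i u_i=\ell$, and let $r_i$ be the number of chosen leaves falling in that subtree, so $\sum_i r_i=r$ and $r_i\le u_i$.

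For the upper part, every ancestor of a chosen leaf that lies at or above its enclosing cover node is either some $C_i$ or a proper ancestor of some $C_i$. Since the $C_i$ cover a contiguous range of leaves and there are $\Oh{\log\ell}$ of them, Lemma~\ref{lem:noderange} bounds the number of such nodes by $\Oh{\log\ell+\log u}=\Oh{\log u}$, using $\ell\le u$.

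For the lower part, each subtree rooted at $C_i$ is itself a wavelet tree with $u_i$ leaves, and the ancestors of the chosen leaves that lie strictly below $C_i$ are exactly their ancestors inside this subtree. Applying Lemma~\ref{lem:nodeset} to the $r_i$ chosen leaves inside it bounds these by $\Oh{r_i+r_i\log\frac{u_i}{r_i}}$, keeping the additive $r_i$ term that the $\max(1,\cdot)$ convention will later absorb. Summing over $i$ gives a lower cost of $\Oh{r+\sum_i r_i\log\frac{u_i}{r_i}}$, and the crux is to bound this sum against $r\log\frac{\ell}{r}$. I would do so with the log-sum (Jensen) inequality: under the constraints $\sum_i r_i=r$ and $\sum_i u_i\le\ell$ it yields $\sum_i r_i\log\frac{u_i}{r_i}\le r\log\frac{\ell}{r}$.

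Finally, adding the two parts gives $\Oh{\log u + r + r\log\frac{\ell}{r}}$, and collapsing $r+r\log\frac{\ell}{r}$ to $\Oh{r\log\frac{\ell}{r}}$ through the $\max(1,\cdot)$ convention on logarithms yields the claimed bound; the only overlap is the $\Oh{\log\ell}$ cover nodes counted in both parts, which is harmless inside the big-O. The main obstacle is the concavity estimate: one must check the log-sum inequality applies cleanly, in particular that $u_i/r_i\ge 1$ for every subtree containing a chosen leaf (which holds since $r_i\le u_i$) and that the inequality is invoked in the direction that produces an \emph{upper} bound on the weighted logarithmic sum rather than the more familiar lower bound.
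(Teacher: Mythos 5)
Your proposal is correct and follows essentially the same route as the paper: cover the $\ell$ contiguous leaves by $\Oh{\log\ell}$ nodes (Lemma~\ref{lem:cover}), bound the ancestors at and above the cover by Lemma~\ref{lem:noderange}, and bound the nodes inside the cover subtrees by Lemma~\ref{lem:nodeset} plus a concavity argument. The only difference is in that last step: the paper bounds every subtree size by $\ell$ and applies convexity in the $r_i$ alone, obtaining $\Oh{r\log\frac{\ell}{r/c}}$ with $c=\Oh{\log\ell}$ and then absorbing $c$, whereas your log-sum inequality with the true subtree sizes $u_i$ (using $\sum_i u_i=\ell$) yields $r\log\frac{\ell}{r}$ directly, a slightly tighter and cleaner way to finish.
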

\begin{proof}
By Lemma~\ref{lem:cover} those leaves are covered by $c=\Oh{\log \ell}$ nodes.
Say that $r_i$ of the $r$ searches fall within the $i$th of those subtrees,
then by Lemma~\ref{lem:nodeset} the number of nodes accessed within that
subtree is at most $\Oh{r_i\log\frac{\ell}{r_i}}$, adding up by convexity to at
most $\Oh{r\log\frac{\ell}{r/c}}$. Given the limit on $c$ this is
$\Oh{r\log\frac{\ell}{r}}$. The ancestors reached above those subtrees are
$\Oh{c + \log u} = \Oh{\log u}$ by Lemma~\ref{lem:noderange}, for a total of
$\Oh{\log u + r\log\frac{\ell}{r}}$.
\end{proof}

\section{Information Retrieval Concepts}
\label{sec:ir}

\subsection{Suffix and Document Arrays}
\label{sec:basicdoc}

Let $C$ be a collection of {\em documents} (which are actually strings over
an alphabet $[1,\sigma]$)
$D_1, D_2, \ldots, D_m$. Assume strings are terminated by a special
character ``$\$$'', which does not occur elsewhere in the collection. Now
we identify $C$ with the concatenation of all the documents,
$C[1,n]=D_1D_2\ldots D_m$. Each position $i$ defines a {\em suffix}
$C[i,n]$. A {\em suffix
array} \cite{MM93} of $C$ is an array $A[1,n]$ where the integers $[1,n]$ are
ordered in such a way that the suffix starting at $A[i]$ is lexicographically
smaller than that starting at $A[i+1]$, for all $1 \le i < n$.

Put another way, the suffix array lists all the suffixes of the collection in
lexicographic order. Since any substring of $C$ is the prefix of a suffix,
finding the occurrences of a query string $q$ in $C$ is equivalent to finding the
suffixes that start with $q$. These form a lexicographic range of suffixes,
and thus can be found via two binary searches in $A$ (accessing $C$ for the
string comparisons). As each
step in the binary search may require comparing up to $|q|$ symbols, the
total search time is $\Oh{|q|\log n}$. Once the interval $A[sp,ep]$ is
determined, all the occurrences of $q$ start at $A[i]$ for $sp \le i \le ep$.
Compressed full-text self-indexes permit representing both $C$ and $A$ within
the space required to represent $C$ in compressed form, and for example
determine the range $[sp,ep]$ within time $\Oh{|q|\log\sigma}$ and list each
$A[i]$ in time $\Oh{\log^{1+\epsilon} n}$ for any constant $\epsilon>0$
\cite{fmmn2007,nm2007}.

For listing the distinct documents where $q$ appears, one option is to find out
the document to which each $A[i]$ belongs and remove duplicates. This,
however, requires $\Omega(ep-sp+1)$ time; that is, it is proportional to the
{\em total} number of occurrences of $q$, $occ=ep-sp+1$.
This may be much larger than
the number of distinct documents where $q$ appears, $docc$.

Muthukrishnan \cite{Mut02} solved this problem optimally by defining a
so-called {\em document array} $D[1,n]$, so that $D[i]$ is the document suffix
$A[i]$ belongs to. Other required data structures in his solution are an array $C[1,n]$, so
that $C[i] = \max_{j<i} D[j]=D[i]$, and a data structure to compute range
minimum queries on $C$, $RMQ_C(i,j) = \mathrm{argmin}_{i \le k \le j} C[k]$.
Muthukrishnan was able to list all the distinct documents where $q$ appears in
time $\Oh{docc}$ once the interval $A[sp,ep]$ was found. However, the data
structures occupied $\Oh{n\log n}$ bits of space, which is too much if we
consider the compressed self-indexes that solve the basic string search
problem. Another problem is that the resulting documents are not retrieved in
ascending order, which is inconvenient for several purposes.

V\"alim\"aki and M\"akinen \cite{vm2007} were the first to illustrate the
power of wavelet trees for this problem. By representing $D$ with a wavelet
tree, they simulated $C[i] = select_{D[i]}(D,rank_{D[i]}(D,i-1))$ without
storing it. By using a $2n$-bit data structure for $RMQ$ \cite{FH07}, the total
space was reduced to $n\log m (1+o(1)) + \Oh{n}$ bits, and still
Muthukrishnan's algorithm was simulated within reasonable time,
$\Oh{docc \log m}$.

Ranked document retrieval is usually built around two measures: {\em term
frequency}, $\tf_{d,q} = \dfreq(q,d)$ is the number of times the query
$q$ appears in document $d$, and the {\em document frequency} $\df_q$, the
number of different documents where $q$ appears. For example a typical
weighting formula is $w_{d,q} = \tf_{d,q} \times \idf_q$, where $\idf_q =
\log\frac{m}{\df_q}$ is called the {\em inverse document frequency}. Term
frequencies are easily computed with wavelet trees as $\dfreq(q,d) =
rank_d(D,ep)-rank_d(D,sp-1)$. Document frequencies can be computed with just
$2n+o(n)$ more bits for the case of the $D$ array \cite{Sad07}, and on top of
a wavelet tree for the $C$ array for more general scenarios \cite{GNP10}.

In Section~\ref{sec:doclist} we show how our new algorithms solve the document
listing problem within the same time complexity $\Oh{docc \log m}$, without
using any $RMQ$ data structure, while reporting the documents in increasing order.
This is the basis for a novel algorithm to list the documents where two (or
more) queries appear simultaneously. We extend these solutions to temporal and
hierarchical document collections.

\subsection{Inverted Indexes}
\label{sec:invind}

The {\em inverted index} is a classical IR structure \cite{BYRN99,WMB99},
lying at the heart of most modern Web search engines and applications handling
natural-language text collections.
By ``natural language'' texts one refers to those that can be easily split
into a sequence of {\em words}, and where queries are also limited to
words or sequences thereof ({\em phrases}). An inverted index is an array of
{\em lists}. Each array entry corresponds to a different word of the
collection, and its list points to the documents where that word appears.
The set of different words is called the {\em vocabulary}. Compared to the
document retrieval problem for general strings described above, the
restriction of word queries allows inverted indexes to precompute
the answer to each possible word query.

Two main variants of inverted indexes exist \cite{BYMN02,ZM06}. {\em Ranked
retrieval} is aimed at retrieving documents that are most ``relevant'' to a
query, under some criterion. As explained, a popular relevant formula is
$w_{d,q} = \tf_{d,q} \times \idf_q$, but others built on $\tf$ and $\df$,
as well as even more complex ones, have been used.
In inverted indexes for ranked retrieval, the lists point to the documents
where each word appears, storing also the weight of the word in that
document (in the case of $\tf\times\idf$, only $\tf$ values are stored, since
$\idf$ depends only on the word and is stored with the vocabulary).
IR queries are usually formed by various words, so the relevance of the
documents is obtained by some form of combination of the various individual
weights. Algorithms for this type of query have been intensively
studied, as well as different data organizations for this particular task
\cite{PZSD96,WMB99,ZM06,AM06,SC07}. List entries are usually sorted by
{\em descending weights} of the term in the documents.

Ranked retrieval algorithms try to avoid scanning all the involved inverted
lists. A typical scheme is Persin's \cite{PZSD96}. It first retrieves the
shortest list (i.e., with highest $\idf$), which becomes the candidate set,
and then considers progressively longer lists. Only a prefix of the subsequent
lists is considered, where the weights are above a threshold. Those
documents are merged with the candidate set, accumulating relevance values
for the documents that contain both terms. The longer the list, the least
relevant is the term (as the $\tf$s are multiplied by a lower $\idf$), and
thus the shorter the considered prefix of its list.
The threshold provides a time/quality tradeoff.

The second variant is the inverted indexes
for so-called {\em full-text retrieval} (also known as {\em boolean retrieval}). These simply find all
the documents where the query appears. In this case the lists
point to the documents where each term appears, usually in
{\em increasing document} order. Queries can be single words, in which
case the retrieval consists simply of fetching the list of the
word; or disjunctive queries, where one has to fetch the
sorted lists of all the query words and merge them; or
conjunctive queries, where one has to intersect the lists.
Intersection queries are nowadays more popular, as this is Google's default
policy to treat queries of several words.
Another important query where intersection is essential is the phrase query,
where intersecting the documents where the words appear is the first step.

While intersection can be achieved by scanning all the lists in
synchronization, faster approaches aim to exploit the the phenomenon
that some lists are much shorter than others \cite{Zip49}. This
general idea is particularly important when the lists for many terms need to be
intersected.
The amount of recent research on intersection of inverted lists witnesses the
importance of the problem \cite{DM00,BK02,BY04,BYS05,BLOL06,ST07,CM07,BK08}
(see Barbay et al.~\cite{BLOLS09} for a comprehensive survey).
In particular, in-memory algorithms have received much attention lately, as
large main memories and distributed systems make it feasible to hold the
inverted index entirely in RAM.

Needless to say, space is an issue in inverted indexes, especially when
combined with the goal of operating in main memory.
Much research has been carried out on compressing inverted lists
\cite{WMB99,NMNZBY00,ZM06,CM07}, and on the interaction of compression with query
algorithms, including list intersections.
Most of the list compression algorithms for full-text indexes rely on the fact
that the document identifiers are increasing, and that the differences between
consecutive entries are smaller on the longer lists. The differences are thus
represented with encodings that favor small numbers \cite{WMB99}.
Random access is supported by storing sampled absolute values.
For lists sorted by decreasing weights, these techniques can still be adapted:
most documents in a list have small weight values, and
within the same weight one can still sort the documents by increasing
identifier.

A serious problem of the current state of the art is that an IR system usually
must support both types of retrieval: ranked and full-text. For example, this
is necessary in order to provide ranked retrieval on phrases. Yet, to maintain
reasonable space efficiency, the list must be ordered either by decreasing
weights or by increasing document number, but not both. Hence one type of
search will be significantly slower than the other, if affordable at all.

In Section~\ref{sec:invlists} we show that wavelet trees allow one to build a
data structure that permits, within the same space
required for a single compressed inverted index, retrieving the list of
documents of any term in either decreasing-weight or increasing-identifier
order, thus supporting both types of retrieval. Moreover, we can efficiently
support the operations needed to implement any of the intersection algorithms,
namely: retrieve the $i$th element of a list, retrieve the first element
larger than $x$, retrieve the next element, and several more complex ones.
In addition, our structure offers novel ways of carrying out several
operations of interest. These include, among others, the support for stemming
and for structured document retrieval without any extra space cost.

\section{New Algorithms}
\label{sec:alg}

\subsection{Range Quantile}
\label{sec:rqq}

Two n{\"a}ive ways of solving query $\rqq(i,j,k)$ are by sequentially
scanning the range in
time $\Oh{j-i+1}$ \cite{BFP+73}, and storing the answers to the $\Oh{n^3}$
possible queries in a table and returning answers in $\Oh(1)$ time. Neither of these solutions is
really satisfactory.

Until recently there was no work on range quantile queries, but several authors wrote
about {\em range median queries}, the special case in which $k$ is half
the length of the interval between $i$ and $j$. Krizanc et al.~\cite{KMS05}
introduced the problem of preprocessing for range median queries
and gave four solutions, three of which require time superlogarithmic in $n$.
Their fourth solution requires almost quadratic space, storing
$\Oh{n^2 \log \log n / \log n}$ words to answer queries in constant time
(a {\em word} holds $\log\sigma$ bits).
Bose et al.~\cite{BKMT05} considered approximate
queries, and Har-Peled and Muthukrishnan~\cite{HM08} and Gfeller and
Sanders~\cite{GS??} considered batched queries. Recently, Krizanc
et al.'s fourth solution was superseded by one due to Petersen and
Grabowski~\cite{Pet08,PG09}, who slightly reduced the space bound to
$\Oh{n^2 (\log \log n)^2 / \log^2 n}$ words.

At about the same time we presented the early version of our work~\cite{GPT09}, Gfeller and Sanders~\cite{GS??} gave a similar $\Oh{n}$-word data structure that supports range median queries in $\Oh{\log n}$ time and observed in a footnote that ``a generalization to arbitrary ranks will be straightforward''.  A few months later, Brodal and J{\o}rgensen~\cite{BJ09} gave a more involved data structure that still takes $\Oh{n}$ words but only $\Oh{\log n / \log \log n}$ time for queries.  These two papers have now been merged~\cite{BGJS??}.  Very recently, J{\o}rgensen and Larsen~\cite{JL??} proved a matching lower bound for any data structure that takes \(n \log^{\Oh{1}} n\) space.

In the sequel we show that, if $S$ is represented using a wavelet tree,
we can answer general range quantile queries in $\Oh{\log u}$ time,
where $u \le \min(\sigma,n)$ is the number of distinct symbols in $S$. As
explained in Section~\ref{sec:wt}, within these $n\log\sigma+\Oh{n}$ bits of
space we can also retrieve any element $S[i]$ in time $\Oh{\log u}$, so our
data structure actually {\em replaces} $S$ (requiring only $\Oh{n}$ extra
bits). The latest alternative structure~\cite{JL??} may achieve slightly
better time but it requires $\Oh{n\log n}$ extra bits of space, apart from
being significantly more involved.

\begin{thrm}
\label{thm:quantile}
Given a sequence $S[1,n]$ storing $u$ distinct values over alphabet
$[1,\sigma]$, we can represent $S$ within $n\log\sigma + \Oh{n}$ bits, so
that range quantile queries are solved in time $\Oh{\log u}$. Within that
time we can also know the number of times the returned value appears in the
range.
\end{thrm}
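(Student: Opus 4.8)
The plan is to reuse the balanced wavelet tree of Lemma~\ref{lem:wt} as the sole data structure, so that the claimed bound of $n\log\sigma + \Oh{n}$ bits is immediate; all the work lies in answering $\rqq(S,i,j,k)$ by a single root-to-leaf descent. The guiding observation is that the leaves of a balanced wavelet tree are labeled with the distinct symbols in increasing order, so the left subtree of any node holds exactly the smaller part of the alphabet handled by that node and the right subtree the larger part. Hence, to locate the $k$th smallest value, at each node I only need to count how many of the current elements fall into the left subtree and compare that count against $k$.

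Concretely, I would maintain the invariant that, upon reaching a node $v$ whose subtree is a wavelet tree for the subsequence $S_v$, the query has been remapped to an interval $[i,j]$ in $S_v$ together with a rank $k$, and the sought answer is the $k$th smallest value among $S_v[i,j]$. At the root this holds by definition. At an internal node $v$ with bitmap $B_v$, the number of elements of $S_v[i,j]$ descending to the left is $z = rank_0(B_v,j) - rank_0(B_v,i-1)$. If $k \le z$ the sought value lies in the left subtree, so I recurse into $v_l$ with the remapped interval $[rank_0(B_v,i-1)+1,\,rank_0(B_v,j)]$ and the unchanged rank $k$. Otherwise I recurse into $v_r$ with interval $[rank_1(B_v,i-1)+1,\,rank_1(B_v,j)]$ and rank $k-z$, since the $z$ left-going elements are all smaller than every right-going one. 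This preserves the invariant, and upon reaching a leaf its label is the answer.

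For the time bound, each step performs a constant number of binary $rank$ operations, each $\Oh{1}$ by Lemma~\ref{lem:wt}, and the descent has length equal to the tree height $\lceil\log u\rceil$, giving $\Oh{\log u}$ overall. The frequency of the returned value falls out for free: the interval $[i,j]$ carried down to the leaf contains precisely the occurrences of that symbol within the original range, so its length $j-i+1$ is exactly the desired count, obtained within the same descent.

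I expect the only delicate points to be the bookkeeping of the interval remapping (in particular the $i-1$ offsets inside the $rank_0$ and $rank_1$ expressions) and the argument that decrementing $k$ by $z$ when branching right is correct; the latter rests on the ordering property that every left-subtree symbol is smaller than every right-subtree symbol, which is exactly what makes the $k$th-smallest decomposition align with the left/right split at each node. Degenerate cases (e.g.\ $k$ equal to $z$, or empty remapped intervals) should be checked but pose no real difficulty.
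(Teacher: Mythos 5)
Your proposal is correct and follows essentially the same approach as the paper's proof: the same balanced wavelet tree of Lemma~\ref{lem:wt}, the same root-to-leaf descent comparing $k$ against the count $z = rank_0(B_v,j)-rank_0(B_v,i-1)$ of left-going elements (the paper's $n_l$), the same interval remapping and rank adjustment $k \leftarrow k-z$ when branching right, and the same observation that the leaf interval length $j-i+1$ gives the frequency. The paper adds nothing beyond what you wrote, so your bookkeeping concerns are the only details left to verify, and they go through exactly as you describe.
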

\begin{proof}
We represent $S$ using a wavelet tree $T$, as in Lemma~\ref{lem:wt}.
Query $\rqq(i,j,k)$ is then solved as follows.
We start at the root of $T$ and consider its bitmap $B_{root}$.
We compute $n_l = rank_0(B_{root},j)-rank_0(B_{root},i-1)$, the number of 0s in
$B_{root}[i,j]$. If $n_l \ge k$, then there are at least $k$ symbols in $S[i,j]$
that label leaves descending from the left child $T_l$ of $T$, and thus we
must find the $k$th symbol on $T_l$. Therefore we continue recursively on $T_l$
with the new values $i \leftarrow rank_0(B_{root},i-1)+1$,
$j \leftarrow rank_0(B_{root},j)$, and $k$ unchanged. Otherwise, we must
descend to the right child, mapping the range to
$i \leftarrow rank_1(B_{root},i-1)+1$ and $j \leftarrow rank_1(B_{root},j)$.
In this case, since we have discarded $n_l$ numbers that are already to the
left of the $k$th value, we set $k \leftarrow k-n_l$.
When we reach a leaf, we just return its label. Furthermore, we have that
the value occurs $j-i+1$ times in the original range.
Since $T$ is balanced and we spend constant time at each node as we descend,
our search takes $\Oh{\log u}$ time.
\qed
\end{proof}


\begin{algorithm}[t]
\caption{New wavelet tree algorithms:
$\mathbf{rqq}(v_{root},i,j,k)$ returns $(\rqq(S,i,j,k),f)$ on the wavelet tree
of sequence $S$, assuming $k \le j-i+1$, and where $f$ is the frequency of
the returned element in $S[i,j]$;
$\mathbf{rnv}(v_{root},i,j,0,x)$ returns $(\rnv(S,i,j,x),f,p)$, where $f$ is the
frequency and $p$ is the smallest rank of the returned element in the
multiset $S[i,j]$ (the element is $\perp$ if no answer exists); and
$\mathbf{rint}(v_{root},i_1,j_1,i_2,j_2,[y^s,y^e])$ solves an extension of query
$\rint(S,i_1,j_1,i_2,j_2)$ outputting triples $(y,f_1,f_2)$, where $y$ are the
common elements, $f_1$ is their frequency in $S[i_1,j_1]$, and $f_2$ is their
frequency, in $S[i_2,j_2]$, and moreover $y^s \le y \le y^e$.}
\label{alg:rqqrnvint}
\begin{tabular}{ccc}
\begin{minipage}{0.32\textwidth}
$\mathbf{rqq}(v,i,j,k)$
\begin{algorithmic}
\IF{$v$ is a leaf}
	\RET $(label(v),j-i+1)$
\ELSE
	\STATE $i_l \leftarrow rank_0(B_v,i-1)+1$
	\STATE $j_l \leftarrow rank_0(B_v,j)$
	\STATE $i_r \leftarrow i-i_l$, $j_r \leftarrow j-j_r$
	\STATE $n_l \leftarrow j_l-i_l+1$
	\IF{$k \le n_l$}
		\RET $\mathbf{rqq}(v_l,i_l,j_l,k)$
	\ELSE 	\RET $\mathbf{rqq}(v_r,i_r,j_r,k-n_l)$
	\ENDIF
\ENDIF
\STATE
\STATE
\STATE
\STATE
\STATE
\STATE
\STATE
\STATE
\end{algorithmic}
\end{minipage}
&
\begin{minipage}{0.35\textwidth}
$\mathbf{rnv}(v,i,j,p,x)$
\begin{algorithmic}
\IF{$i>j$} \RET $(\perp,0,0)$
\ELSIF{$v$ is a leaf}
	\RET $(x,j-i+1,p)$
\ELSE
	\STATE $i_l \leftarrow rank_0(B_v,i-1)+1$
	\STATE $j_l \leftarrow rank_0(B_v,j)$
	\STATE $i_r \leftarrow i-i_l$, $j_r \leftarrow j-j_r$
	\STATE $n_l \leftarrow j_l-i_l+1$
	\IF{$x \in labels(v_r)$} \RET $\mathbf{rnv}(v_r,i_r,j_r,p+n_l,x)$
	\ELSE
		\STATE $(y,f) \leftarrow \mathbf{rnv}(v_l,i_l,j_l,p,x)$
		\IF{$y \not= \perp$} \RET $(y,f)$
		\ELSE \RET $\mathbf{rnv}(v_r,i_r,j_r,p+n_l,$
		      \STATE \hspace{1.8cm} $\min labels(v_r))$
		\ENDIF
	\ENDIF
\ENDIF
\end{algorithmic}
\end{minipage}
&
\begin{minipage}{0.32\textwidth}
$\mathbf{rint}(v,i_1,j_1,i_2,j_2,rng)$
\up
\begin{algorithmic}
\IF{$i_1 > j_1 ~\lor~ i_2 > j_2$}
	\RET
\ELSIF{$labels(v) \cap rng = \emptyset$}
	\RET
\ELSIF{$v$ is a leaf}
	\OUTPUT $(label(v),$
	\STATE \hspace{1.2cm} $j_1-i_1+1,j_2-i_2+1)$
\ELSE
	\STATE $i_{1l} \leftarrow rank_0(B_v,i_1-1)+1$
	\STATE $j_{1l} \leftarrow rank_0(B_v,j_1)$
	\STATE $i_{1r} \leftarrow i_1-i_{1l}$, $j_{1r} \leftarrow j_1-j_{1r}$
	\STATE $i_{2l} \leftarrow rank_0(B_v,i_2-1)+1$
	\STATE $j_{2l} \leftarrow rank_0(B_v,j_2)$
	\STATE $i_{2r} \leftarrow i_2-i_{2l}$, $j_{2r} \leftarrow j_2-j_{2r}$
	\STATE $\mathbf{rint}(v_l,i_{1l},j_{1l},i_{2l},j_{2l},rng)$
	\STATE $\mathbf{rint}(v_r,i_{1r},j_{1r},i_{2r},j_{2r},rng)$
\ENDIF
\STATE
\STATE
\STATE
\STATE
\end{algorithmic}
\end{minipage}
\end{tabular}
\end{algorithm}

Algorithm~\ref{alg:rqqrnvint} (left) gives pseudocode.
Note that, if $u$ is constant, then so is our query time.

\subsection{Range Next Value}
\label{sec:rnv}

Again, two naive ways of solving query $\rnv(i,j,x)$ on
sequence $S[1,n]$ are scanning in $\Oh{j-i+1}$ worst-case time, and
precomputing all the possible answers in $\Oh{n^3}$ space to achieve constant
time queries. Crochemore et al.~\cite{CIR07} reduced the space to $\Oh{n^2}$ words while
preserving the constant query time. Later, Crochemore et al.~\cite{CIKRW08}
further improved the space to $\Oh{n^{1+\epsilon}}$ words. Alternatively,
M\"akinen et al.~\cite[Lemma 4]{MNU05} give a simple $\Oh{n}$-words space
solution based on an augmented binary search tree. This yields time
$\Oh{\log u}$, where once again $u \le \min(n,\sigma)$ is the number of distinct
symbols in $S$ and $[1,\sigma]$ the domain of values. For the particular case
of semi-infinite queries (i.e., $i=1$ or $j=n$) one can use an $\Oh{n}$-words
and $\Oh{\log\log n}$ time solution by Gabow et al.~\cite{GBT84}.

By using wavelet trees, we also solve the general problem in time $\Oh{\log u}$.
Our space is better than the simple linear-space solution,
$n+\Oh{n/\log\sigma}$ words ($n$ of which actually replace the sequence).


\begin{thrm}
\label{thm:nextvalue}
Given a sequence $S[1,n]$ storing $u$ distinct values over alphabet
$[1,\sigma]$, we can represent $S$ within $n\log\sigma + \Oh{n}$ bits, so
that range next value queries are solved in time $\Oh{\log u}$. Within the
same time we can return the position of the first occurrence of the value in
the range.
\end{thrm}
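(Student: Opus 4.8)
The plan is to represent $S$ with the same balanced wavelet tree $T$ from Lemma~\ref{lem:wt}, so the space bound $n\log\sigma+\Oh{n}$ is immediate, and to answer $\rnv(S,i,j,x)$ by a single root-to-leaf traversal that mimics the range-quantile descent but uses the query value $x$ to steer the search. The guiding idea is a divide-and-conquer on the leaf alphabet: at an internal node $v$ whose leaves split into a left half $labels(v_l)$ and a right half $labels(v_r)$, the smallest value $\ge x$ lying in the multiset $S[i,j]$ is found by first deciding which subtree can contain the \emph{leftmost} admissible leaf.

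First I would map the range $[i,j]$ into both children exactly as in Theorem~\ref{thm:quantile}: set $i_l\la rank_0(B_v,i-1)+1$, $j_l\la rank_0(B_v,j)$ and the symmetric $i_r,j_r$ for the 1s, so that $S_{v_l}[i_l,j_l]$ and $S_{v_r}[i_r,j_r]$ are the projections of $S[i,j]$ onto the two children. The branching logic has three cases according to where $x$ falls. If $x\in labels(v_r)$, then every admissible value $\ge x$ is a right-subtree leaf, so I recurse only on $v_r$ with threshold $x$ unchanged. If $x\in labels(v_l)$, the answer might be a left-subtree value $\ge x$, \emph{or}, should the left range contain no such value, the smallest right-subtree leaf present in $S[i,j]$; so I first recurse left with threshold $x$, and only if that returns $\perp$ do I recurse right with the relaxed threshold $\min labels(v_r)$ (equivalently, "any value", realized by passing the leftmost label of the right subtree). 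The base cases are $i>j$, meaning the current (sub)range is empty and we return $\perp$, and reaching a leaf, where the surviving label is exactly $\rnv$ and $j-i+1$ is its frequency; the auxiliary accumulator $p$, incremented by $n_l=j_l-i_l+1$ whenever we skip the left child, delivers the smallest rank (first occurrence position) as promised in the theorem. This matches the pseudocode $\mathbf{rnv}$ already given in Algorithm~\ref{alg:rqqrnvint}.

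For correctness I would argue by induction on the height of $v$ that the call returns the least value $\ge x$ occurring in the current mapped range, or $\perp$ if none exists. The key invariant is that the three-way case analysis is exhaustive and that the left-then-right fallback never misses a smaller candidate: any left-subtree value is strictly smaller than any right-subtree value, so a left answer (when it exists) dominates, and the relaxed right call is invoked only after the left range is certified to hold nothing $\ge x$. The running time is $\Oh{\log u}$: the dangerous-looking double recursion in the $x\in labels(v_l)$ branch is not a genuine branching, because after the left call returns $\perp$ the range passed to $v_l$ must have been empty, i.e.\ no further descent happened there; more carefully, at most one value-carrying recursion proceeds per level, so the traversal visits $\Oh{\log u}$ nodes, spending constant time each via the constant-time binary $rank$ of Lemma~\ref{lem:wt}.

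The step I expect to be the main obstacle is pinning down precisely that the left-subtree failure plus right-subtree fallback costs only $\Oh{\log u}$ rather than branching into two full root-to-leaf paths. One must verify that when the left recursion returns $\perp$ it does so after descending along a single path that dies at an empty range (or that its total work is $\Oh{\log u}$ on its own), and that the subsequent right call with $\min labels(v_r)$ behaves as a plain "leftmost present leaf" search — again a single path. I would make this rigorous by charging the work to a single monotone descent whose direction is determined level by level, observing that the threshold-relaxation case strictly switches the search from "value-constrained" to "unconstrained leftmost" mode and never switches back, so the two phases concatenate into one path of length $\Oh{\log u}$.
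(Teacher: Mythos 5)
Your algorithm, space bound, and correctness argument are exactly the paper's: descend toward $x$, go right-only when $x \in labels(v_r)$, otherwise try the left child and, on failure, run a minimum-value search on the right child (the paper phrases this fallback as a $k=1$ instance of $\rqq$ on the right subtree). The genuine gap is in your time analysis. The claim you lean on --- ``after the left call returns $\perp$ the range passed to $v_l$ must have been empty, i.e.\ no further descent happened there'' --- is false. Counterexample: leaves $1,\dots,u$, query value $x=2$, and $S[i,j]$ consisting only of copies of the value $1$. At every node on the path toward the leaf of $x$, the mapped range stays non-empty (the $1$s take the same left turns as $x$ until the very last level), yet the call returns $\perp$ only after descending $\Theta(\log u)$ levels; so a failed left recursion is not an $\Oh{1}$ event, and one cannot dismiss the double recursion on those grounds. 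Your backup claim, ``at most one value-carrying recursion proceeds per level,'' also fails literally: once a fallback does succeed, its descent coexists, level by level, with the already-dead constrained path, so two nodes with non-empty ranges can be visited per level.

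The stated bound does hold, but by a different accounting, which is also what the paper's own terse ``the overall time is $\Oh{\log u}$'' leaves implicit: (i) the constrained descent toward $x$ (the spine of nested primary calls) is a single path with $\Oh{1}$ work per node, hence $\Oh{\log u}$ total; (ii) a relaxed-threshold (minimum) search returns $\perp$ only when its initial range is empty, so every \emph{failed} fallback costs $\Oh{1}$, and at most one fallback hangs off each spine node; (iii) at most one fallback in the entire execution runs on a non-empty range --- a minimum search on a non-empty range always succeeds, its value is then returned unchanged by every ancestor, so no further fallbacks are ever attempted --- and that single search visits $\Oh{1}$ nodes per level, costing $\Oh{\log u}$. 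Summing the three contributions gives $\Oh{\log u}$. This is the rigorous content behind your closing ``two phases that never switch back'' remark: the execution is one constrained downward path, a partial unwinding with $\Oh{1}$-cost failed fallbacks, and at most one unconstrained downward path. Your instinct that this was the delicate step was correct, but the justification you actually gave for it would not survive the counterexample above.
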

\begin{proof}
We represent $S$ using a wavelet tree $T$, as in Lemma~\ref{lem:wt}.
Query $\rnv(i,j,x)$ is then solved as follows.
We start at the root of $T$ and consider its bitmap $B_{root}$.
If $x$ labels a leaf descending by the right child $T_r$, then the left
subtree is irrelevant and we continue recursively on $T_r$, with
the new values $i \leftarrow rank_1(B_{root},i-1)+1$ and
$j \leftarrow rank_1(B_{root},j)$. Otherwise, we must descend to the left
child $T_l$, mapping the range to $i \leftarrow rank_0(B_{root},i-1)+1$ and
$j \leftarrow rank_0(B_{root},j)$. If our interval $[i,j]$ becomes empty at
any point, we return with no value.

When the recursion returns from $T_r$ with no value, we return no value as
well. When it returns from $T_l$ with no value, however, there is still a
chance that a number $\ge x$ appears on the right in the interval $[i,j]$.
Indeed, if we descend to $T_r$ and map $i$ and $j$ accordingly, and the
interval is not empty, then we want the minimum value of that interval,
that is, the minimum value in $S_l[i,j]$. This is a particular case of a
$\rqq$ query carried out on a wavelet (sub)tree $T_r$.
The overall time is $\Oh{\log u}$.
\qed
\end{proof}

Algorithm~\ref{alg:rqqrnvint} (middle) gives pseudocode.
While our space gain may not appear very impressive, we point out that our solution
requires only $\Oh{n}$ extra bits on top of the sequence (if we accept the
logarithmic slowdown in accessing $S$ via the wavelet tree). Moreover,
we can use the same wavelet tree to carry out the other algorithms, instead of
requiring a different data structure for each. This will be relevant for the
applications, which need support for several of the operations simultaneously.

\subsection{Range Intersection}
\label{sec:int}

The query $\rint(i_1,j_1,i_2,j_2)$, which finds the common symbols in
two ranges of a sequence $S[1,n]$ over alphabet $[1,\sigma]$, appears naturally
in many cases. In particular, a simplified variant where the two ranges to
intersect are sorted in increasing order arises when intersecting full-text
inverted lists, when solving intersection, phrase, or proximity queries.

Worst-case complexity measures depending only on the range sizes are of little
interest for this problem, as an adversary can always force us to completely
traverse both ranges, and time complexity $\Oh{j_1-i_1+j_2-i_2+1}$ is easily
achieved through merging\footnote{If the ranges are already ordered; otherwise
a previous sorting is necessary.}.
More interesting are {\em adaptive} complexity measures, which
define a finer {\em difficulty measure} for problem instances. For example,
in the case of sorted ranges, an instance where the first element of the second
range is larger than the last element of the first range is easier (one can
establish the emptiness of the result with just one well-chosen comparison)
than another where elements are mixed.

A popular measure for this case is called {\em alternation} and noted
$\alpha$ \cite{BK02}. For two sorted sequences without repetitions, $\alpha$ can
be defined as the number of switches from one sequence to the other in the
sorted union of the two ranges, or equivalently, as the time complexity of a
nondeterministic program that guesses which comparisons to carry out, or
equivalently as the length of a certificate that, through the results of
comparing elements of both sequences, is sufficient to prove what the result
is. This definition can be extended to intersecting $k$ ranges. Formally, the
measure $\alpha$ is defined through a function $C:[1,\sigma]\rightarrow [0,k]$,
where $C[c]$ gives the number of any range where symbol $c$ does not appear,
and $C[c]=0$ if $c$ appears in all ranges. Then $\alpha$ is the number of zeros
in $C$ plus the minimum possible number of switches (i.e., $C[c] \not= C[c+1]$)
in such a function. A lower bound in terms of alternation (still holding for
randomized algorithms) \cite{BK02} is
$\Omega\left(\alpha \cdot \sum_{1\le r\le k} \log\frac{n_r}{\alpha}\right)$,
where $n_r$ is the length of the $r$th range. There exist adaptive algorithms
matching this lower bound \cite{DM00,BK02,BK08}.

We show now that the wavelet tree representation of $S[1,n]$ allows a rather
simple intersection algorithm that approaches the lower bound, even if one
starts from ranges of {\em disordered} values, possibly with repetitions.
For $k=2$, we start from both ranges $[i_1,j_1]$ and $[i_2,j_2]$ at the root
of the wavelet tree. If either range is empty, we stop. Otherwise we map both
ranges to the left child of the root using $rank_0$, and to the right child
using $rank_1$. We continue recursively on the branches where both intervals
are nonempty. If we reach a leaf, then its corresponding symbol is in the
intersection, and we know that there are $j_1-i_1+1$ copies of the symbol
in the first range, and $j_2-i_2+1$ in the second. For $k$ ranges $[i_r,j_r]$,
we maintain them all at each step, and abandon a path as soon as any of the
$k$ ranges becomes empty. Algorithm~\ref{alg:rqqrnvint} (right) gives
pseudocode for the case $k=2$.

\begin{lemma} \label{lem:adaptive}
The algorithm just described requires time
$\Oh{\alpha k\log\frac{u}{\alpha}}$, where $u$ is the number of distinct
values in the sequence and $\alpha$ is the alternation complexity of the
problem.
\end{lemma}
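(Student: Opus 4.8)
The plan is to analyze the recursive intersection algorithm by bounding the total number of wavelet tree nodes it visits, and showing this number is $\Oh{\alpha k \log\frac{u}{\alpha}}$. The work done is constant per visited node (we perform $\Oh{k}$ rank operations to map the $k$ ranges to the children), so counting visited nodes suffices up to the factor of $k$. A node is visited precisely when all $k$ ranges are nonempty at that node, which means every one of the $k$ query ranges contains at least one symbol descending from that node. Equivalently, the symbols surviving to a node form the intersection so far, and the visited nodes are exactly the ancestors of the leaves labeled by symbols that lie in the intersection, together with certain nodes that get pruned.

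First I would characterize the \emph{leaves} the algorithm can reach: a leaf is reached only if its symbol appears in all $k$ ranges, i.e., it is one of the $\df=|\text{intersection}|$ common symbols. So the fully-descended paths terminate at the intersection leaves. The subtlety is that the algorithm also visits nodes on paths that are eventually abandoned (where some range becomes empty strictly below). To handle both, I would relate the set of visited nodes to the certificate underlying the alternation measure $\alpha$. The key idea is that the set of ``surviving'' symbols — those in at least the correct combination of ranges at a given subtree — can be organized into $\Oh{\alpha}$ maximal contiguous runs of leaves in the alphabet order, because $\alpha$ counts the switches in the function $C$ over $[1,\sigma]$; a block of consecutive symbols with the same membership pattern corresponds to one such run. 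The algorithm only descends into a subtree if that subtree contains at least one symbol common to all ranges, so the reached leaves are confined to $\Oh{\alpha}$ contiguous intervals of the $u$ leaves, each contributing its ancestors.

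The main technical step is then to bound the total number of visited nodes using the covering lemmas already proven. By Lemma~\ref{lem:cover}, the (at most $\alpha$) intersection leaves lie within $\Oh{\log}$-sized covers, and more directly I would apply Lemma~\ref{lem:leafrangeset}: the ancestors of the $\alpha$ reached leaves, drawn from the full range of $u$ leaves, number $\Oh{\log u + \alpha \log\frac{u}{\alpha}} = \Oh{\alpha \log\frac{u}{\alpha}}$ (absorbing the $\log u$ term since $\alpha\ge 1$). The pruned nodes — those visited but abandoned below — are each either an intersection-leaf ancestor or a sibling of one along such a path, so they inflate the count only by a constant factor and stay within the same bound. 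Multiplying by the $\Oh{k}$ cost per node yields the claimed $\Oh{\alpha k \log\frac{u}{\alpha}}$.

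The hard part will be making precise the claim that the reached/visited nodes are confined to $\Oh{\alpha}$ contiguous leaf intervals in a way that cleanly connects the combinatorial alternation measure $\alpha$ (defined via switches in $C$ over the alphabet) to the tree-navigation argument. Specifically, I must argue that the algorithm never descends into a subtree unless that subtree contains a symbol present in all $k$ ranges, and that such ``useful'' subtrees, restricted to the leaf level, form $\Oh{\alpha}$ runs; the abandoned-path overhead must then be charged against these runs without blowing up the bound. Once that confinement is established, Lemma~\ref{lem:leafrangeset} does the remaining counting essentially mechanically.
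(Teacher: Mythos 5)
Your reduction of the cost to (number of visited nodes)$\,\times\,\Oh{k}$, and your plan to finish with Lemma~\ref{lem:nodeset}, match the paper; but the step you yourself flag as the hard part rests on a false claim, and this is a genuine gap. The algorithm does \emph{not} descend into a subtree only when that subtree contains a symbol common to all $k$ ranges: it descends into a node $v$ whenever every range has \emph{some} symbol below $v$, and these witnesses may be different symbols for different ranges. Concretely, take $k=2$ with one range containing exactly the odd symbols and the other exactly the even symbols. The intersection is empty, so there are no intersection leaves at all; yet every internal node of the balanced wavelet tree has both an odd and an even leaf below it, so the algorithm visits all $\Theta(u)$ internal nodes before pruning at the last level. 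This breaks both of your key claims: the visited nodes are not confined to ancestors of intersection leaves, and the pruned nodes are not ``siblings of intersection-leaf ancestors'' inflating the count by a constant factor (here they are $\Theta(u)$ nodes against zero intersection leaves). The bound still holds in this example only because $\alpha=\Theta(u)$ there, which is precisely what your accounting fails to capture. Relatedly, your claim that blocks of symbols with the same \emph{membership pattern} form $\Oh{\alpha}$ runs is unjustified: $\alpha$ is defined via an optimal single-valued certificate $C$, and the number of membership-pattern runs can greatly exceed the number of switches of the best $C$.

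The fix --- and the paper's actual argument --- is to charge the \emph{maximal pruned nodes}, i.e., the leaves of the trie $T_\cap$ of root-to-leaf paths surviving in all ranges, to \emph{switches} of the certificate $C$ rather than to intersection symbols. If $u_1$ and $u_2$ are consecutive leaves of $T_\cap$ that are not wavelet tree leaves, then any valid $C$ must be nonzero below them (their symbols are not in the intersection) and cannot take a constant value $r$ across all symbols spanned, since $C\equiv r$ below $u_1$ would mean range $r$ has no symbol below $u_1$, contradicting that the algorithm reached $u_1$. Hence each consecutive pair of $T_\cap$ leaves forces a switch, so $T_\cap$ has at most $\alpha$ leaves (switches plus zeros), and Lemma~\ref{lem:nodeset} applied to those $\le\alpha$ nodes bounds $|T_\cap|$ by $\Oh{\alpha\log\frac{u}{\alpha}}$. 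Your run-based intuition can be salvaged along these lines: every node the algorithm enters must have a leaf interval containing either a zero of $C$ or a boundary between two runs of $C$ (otherwise some range is absent from that entire subtree), so the visited nodes are ancestors of at most $\alpha$ ``special'' leaves --- but those special leaves are the switch points of $C$, not the intersection leaves.
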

\begin{proof}
Consider the function $p : \Sigma \rightarrow \{0,1\}^*$, so that $p(c)$ is a
bit stream of length equal to the depth of the leaf representing symbol $c$ in
the wavelet tree. More precisely, $p[i]$ is 0 if the leaf descends from the
left child of its ancestor at depth $i$, and 1 otherwise. That is, $p(c)$
describes the path from the root to the wavelet tree leaf labeled $c$.

Now let $T_r$ be the {\em trie} (or digital tree) formed by the strings
$p(c)$ for all those $c$ appearing in $S[i_r,j_r]$, and let $T_\cap$ be
the trie formed by the branches present in all $T_r$, $1\le r\le k$.
It is easy to see that $T_\cap$ contains precisely the wavelet tree nodes
traversed by our intersection algorithm, so the complexity of our algorithm is
$\Oh{|T_\cap|}$.

We show now that $|T_\cap|$ has at most $\alpha$ leaves. The leaves of $T_\cap$
that are wavelet tree leaves correspond to the symbols that belong to the
intersection, and thus to the number of 0s in any function $C$. This is
accounted for in measure $\alpha$. So let us focus on the other leaves of
$T_\cap$. Consider two consecutive leaves of $T_\cap$ that are not wavelet
tree leaves $u_1$ and $u_2$, and any symbols $c_1<c_2$ whose wavelet tree
leaves $v_1$ and $v_2$ descend from $u_1$ and $u_2$, respectively. If there
were a single range $S[i_r,j_r]$ where $c_1$ and $c_2$ would not belong, then
the lowest common ancestor of $v_1$ and $v_2$ would not
belong to $T_\cap$, and thus there could not be two leaves $u_1$ and $u_2$ in
$T_\cap$. Therefore, for each pair of consecutive leaves in $T_\cap$ there is
at least one switch in $C$, and thus there are at most $\alpha$ leaves in
$T_\cap$. Thus, by Lemma~\ref{lem:nodeset}, $T_\cap$ has
$\Oh{\alpha\log\frac{u}{\alpha}}$ nodes. To obtain the final cost we
multiply by $k$, which is the cost of maintaining the $k$ ranges throughout the
traversal.
\qed
\end{proof}

In the case where all the lists are sorted and without repetitions (so $n_r
\le u$), our algorithm complexity is pretty close to the lower bound, matched
when all $n_r = u$.
Note also that our algorithm is easily extended to handle the so-called
{\em $(t,k)$-thresholded problem} \cite{BK02}, where we return any symbol
appearing in at least $t$ of the $k$ ranges. It is simply a matter of abandoning a
range only when more than $k-t$ ranges have become empty.

A different form of carrying out the intersection is via the query
$\rnv(S,i,j,x)$: Start with $x_1 \leftarrow \rnv(S,i_1,j_1,1)$ and $x_2
\leftarrow \rnv(S,i_2,j_2,x_1)$. If $x_2 > x_1$ then continue with
$x_1 \leftarrow \rnv(S,i_1,j_1,x_2)$; if now $x_1 > x_2$ then continue with
$x_2 \leftarrow \rnv(S,i_2,j_2,x_1)$; and so on. If at any moment $x_1 = x_2$
then output it as part of the intersection and continue with
$x_1 \leftarrow \rnv(S,i_1,j_1,x_2+1)$. It is not hard to see that there must
be a switch in $C$ for each step we carry out, and therefore the cost is
$\Oh{\alpha\log u}$.

To reduce the cost to $\Oh{\alpha\log\frac{u}{\alpha}}$,
we carry out a {\em fingered search} in $\rnv$ queries, that is, we remember
the path traversed from the last time we called $\rnv(S,i,j,x)$ and only
retraverse the necessary part upon calling $\rnv(S,i,j,x')$ for $x'>x$. For
this reason we move upwards from the leaf where the query for $x$ was solved
until reaching the first node $v$ such that $x' \in labels(v)$, and complete
the $\mathbf{rnv}$ procedure from that node. Since the total work done by this point is
proportional to the number of distinct ancestors of the $\alpha$ leaves arrived
at, the complexity is $\Oh{\alpha\log\frac{u}{\alpha}}$ by
Lemma~\ref{lem:nodeset}.

This second procedure is the basis of most algorithms for intersecting two or
more lists \cite{BLOLS09}. The $\mathbf{rint}$ method we have presented is
simpler, potentially faster, and more flexible (e.g., it is easily adapted to
$t$-thresholded queries). Moreover, it is specific to the wavelet tree.

\section{Document Listing and Intersections}
\label{sec:doclist}

%

The algorithm for $\rr(P,x^s,x^e,y^s,y^e)$ queries described in
Section~\ref{sec:wt} can be used to solve problem $\dlist(q)$, as follows. As
explained in Section~\ref{sec:basicdoc}, use a (compressed) suffix array $A$
to find the range $A[sp,ep]$ corresponding to query $q$, and use a wavelet tree
on the document array $D[1,n]$ on alphabet $[1,m]$, so that the answer is the
set of distinct document numbers $d_1 < d_2 < \ldots < d_{docc}$ in $D[sp,ep]$.
Then $\rr(D,sp,ep,1,m)$ returns the $docc$ document numbers, in
order, in total time $\Oh{docc \log\frac{m}{docc}}$. Moreover, procedure
$\mathbf{report}$ in Algorithm~\ref{alg:range} also retrieves the frequencies
of each $d_i$ in $D[sp,ep]$, outputting the pairs $(d_i,\tf_{q,d_i})$ within
the same cost. (As explained, arbitrary frequencies $\tf_{d,q}=\dfreq(q,d)$
can also be obtained in time $\Oh{\log m}$ by two $rank_d$ queries on $D$.)
Alternative solutions using $\rqq$ or $\rnv$ queries are possible, and will be
explored later for other applications.

As explained in Section~\ref{sec:basicdoc}, this is simpler and requires less
space than various previous solutions\footnote{It is even better than our
previous solution based on $\rqq$ queries \cite{GPT09}, which takes time
$\Oh{docc \log m}$.}, and has the additional benefit of delivering the
documents in increasing document identifier order. This enables us to
extend the algorithm to more complex scenarios, as shown in
Section~\ref{sec:invlists}.

Now consider $k$ queries $q_1, q_2, \ldots, q_k$, and the problem of listing
the documents where all those queries appear (i.e.,
problem $\dint(q_1,\ldots,q_k)$). With the suffix array we can map the
queries to ranges $[sp_r,ep_r]$, and then the problem is that of finding the distinct document numbers that appear in all those ranges. This corresponds
exactly to query $\rint(D,sp_1,ep_1,\ldots,sp_k,ep_k)$,
which we have solved in Section~\ref{sec:int}. We
have indeed solved a more general variant where we list the documents (and
their $\tf_{d,q_r}$ values) where at least $t$ of the $k$ terms appear. Note
this corresponds to the disjunctive query for the case $t=1$.

\subsection{Temporal and Hierarchical Documents}
\label{sec:temphier}

The simplest extension when we have versioned or hierarchical documents is to
restrict queries $\dlist(q)$ and $\dint(q_1,\ldots,q_k)$ to a range of documents
$[d_{min},d_{max}]$, which represents a temporal interval or a subtree of the
hierarchy in which we are interested. Such a restricted document listing and
intersection is easily supported by setting $rng = [d_{min},d_{max}]$ in
procedures $\mathbf{report}$ (Algorithm~\ref{alg:range}) and $\mathbf{rint}$
(Algorithm~\ref{alg:rqqrnvint}), respectively. The complexities are
$\Oh{docc\log\frac{d_{max}-d_{min}+1}{docc}}$ for listing and
$\Oh{\alpha\log\frac{d_{max}-d_{min}+1}{\alpha}}$ for intersections, due to
Lemma~\ref{lem:leafrangeset}.

When the hierarchical documents represent nodes in an XML collection, other
queries of interest become obvious. Indeed, how to carry out ranking on XML
collections is an unresolved issue, with very complex ranking proposals
counterweighted by others advocating simple measures. Rather than trying
to cover such a broad topic, we refer the reader to comprehensive surveys and
discussions in the article by Hiemstra and Mihajlovi\'c \cite{HM05}, the
PhD thesis of Pehcevski \cite[Ch.~2]{Pec06}, and the recent book by Lalmas
\cite[Ch.~6]{Lal09}.

In most models, the frequency of a term within a subtree,
and the size of such subtree, are central to the definition of ranking
strategies. The latter is usually easy to compute from the sequence
representation. The former, a generalization of $\dfreq$ to ranges, can
actually be computed with query $\rc(D,sp,ep,dl,dr)$,
defined in the Introduction (see also Algorithm~\ref{alg:range}),
where $[sp,ep]$ is the suffix array range
corresponding to query $q$, and $[dl,dr]$ is the range of documents
corresponding to our structural element. This query also takes time
$\Oh{\log m}$.

\subsection{Restricting to Retrievable Units}
\label{sec:units}

We focus now on a more complex issue that is also essential for XML ranked
retrieval. Query languages such as XPath and XQuery define structural
constraints together with terms of interest. For example, one might wish to
retrieve books about the term ``cryptography'', or rather book sections about
that term, in each case ranked by the relevance of the term. Thus the
definition of the {\em retrievable unit} (books, sections) comes in the query
together with the terms (cryptography) whose relevance is to be computed with
respect to the retrievable units that contain it.
We show now how to support a simple model where the retrievable units are
defined by an XML {\em tag} name, and consider other models at the end.
We report the smallest retrievable unit containing the query occurrences.

Following common models of XML data (e.g., \cite{ACMNNSV10}), we consider that
text data can appear only at the leaves of the XML structure, so that we
create extra leaves if text data appears between consecutive structural
elements (a bitmap may be used to mark leaves that do not contain any text
data, but we omit this detail here for clarity). Thus, each leaf of the XML
tree will be associated with a document number, 1 to $m$, so that $d_i$ will be the
document associated to the $i$th leaf. The XML tree, containing $n$ nodes, will
be represented using a sequence $P[1,2n]$ of parentheses \cite{Jac89}.
These are obtained through a preorder traversal, by appending an opening
parenthesis when we reach a node and a closing one when we leave it. A tree
node will be identified with the position of its opening parenthesis in $P$.
Several succinct data structures can represent the parentheses within $2n+o(n)$
bits and simulate a wealth of tree operations in constant time
(e.g., \cite{SN10}).

In addition we represent a sequence $\Tag[1,2n]$ giving
the tag name associated to each parenthesis in $P$. Sequence $\Tag$ is
represented using a wavelet tree in $2n\log \tau + \Oh{n}$ bits of space,
where $\tau$ is the number of distinct tags in the collection. Finally, for
each distinct tag name $t$ we store a parenthesis representation $P_t$ of the
nodes of the XML tree that are tagged $t$. The total space for $P$, $\Tag$,
and the $P_t$ trees is $2n\log\tau + \Oh{n}$ bits.

A first task we can carry out is, given an occurrence in document number
(i.e., leaf) $i$, find $\expand(t,i)$, the range of documents (i.e., leaves)
corresponding to its lowest ancestor tagged $t$. This allows
us to find the closest retrievable unit to which the occurrence at leaf $i$
must be assigned. We use operation $j = \mathit{selectLeaf}(P,i)$ to find the
$i$th leaf of $P$. Then $r = rank_t(\Tag,j)$ finds the rank of the last
occurrence of $t$ in $\Tag$ preceding $j$. If $P_t[r] = \mathtt{'('}$, then
$r$ is the lowest ancestor of $i$ tagged $t$, otherwise it is $r \leftarrow
\textit{parent}(P_t,r)$, the node tagged $t$ that encloses position $r$.
Finally, position $r$ is mapped back to the global tree $P$ with
$p = select_t(\Tag,r)$, and we return the range of leaves corresponding to $p$,
$\expand(t,i) = \mathit{leaf\_range}(p) = [\mathit{rankLeaf}(P,p)+1,
\mathit{rankLeaf}(P,p+2\cdot \mathit{subtreeSize}(P,p))]$,
where {\em rankLeaf} and {\em subtreeSize} are self-explanatory tree operations.
The process takes $\Oh{\log \tau}$ time, dominated by the costs to operate on
$\Tag$. Algorithm~\ref{alg:hier} (left) gives pseudocode.

\begin{algorithm}[t]
\caption{Algorithms for hierarchical document listing and intersections:
$\mathbf{exp}(\Tag,P,P_t,t,i)$ computes the node in $P$ for
$\mathit{expand}(t,i)$ and $\mathbf{leafRange}(P,p)$ computes
$\mathit{leaf\_range}(p)$; $\mathbf{hdfreq}(P,D,sp,ep,p)$ computes the
frequency of $p$ in $D[sp,ep]$; $\mathbf{hdlist}(A,D,\Tag,P,P_t,t,q)$ lists
the retrievable units where $q$ appears; and
$\mathbf{hdint}(\Tag,P,P_t,t,v_{root},i_1,j_1,i_2,j_2,rng)$ lists the
retrievable units with leaves in both $D[i_1,j_1]$ and $D[i_2,j_2]$,
with their frequencies in both ranges, and subject to belonging to document
range $rng$ (which is assumed not to split any retrievable unit).}
\label{alg:hier}
\begin{tabular}{cc}
\begin{minipage}{0.34\textwidth}
$\mathbf{exp}(\Tag,P,P_t,t,i)$
\home{\up}
\begin{algorithmic}
\STATE $j \leftarrow \mathit{selectLeaf}(P,i)$
\STATE $r \leftarrow \mathbf{rank}(\Tag,t,j)$
\IF{$P_t[r] = \mathtt{')'}$}
	\STATE $r \leftarrow parent(P_t,r)$
\ENDIF
\RET $\mathbf{select}(\Tag,t,r)$
\STATE
\end{algorithmic}

$\mathbf{leafRange}(P,p)$
\home{\up}
\begin{algorithmic}
\RET $[\mathit{rankLeaf}(P,p)+1,$
\STATE  $\mathit{rankLeaf}(P,p+2\cdot \mathit{subtreeSize}(P,p))]$
\STATE
\end{algorithmic}

$\mathbf{hdfreq}(P,D,sp,ep,p)$
\home{\up}
\begin{algorithmic}
\STATE $[dl,dr] \leftarrow \mathbf{leafRange}(P,p)$
\RET $\mathbf{count}(D,sp,ep,[dl,dr])$
\STATE
\end{algorithmic}

$\mathbf{hdlist}(A,D,\Tag,P,P_t,t,q,[d_{min},d_{max}])$
\up
\begin{algorithmic}
\STATE $[sp,ep] \leftarrow pattern\_search(A,q)$
\STATE $v \leftarrow root(D)$
\STATE $(d,f,r) \leftarrow \mathbf{rnv}(v,sp,ep,0,d_{min})$
\WHILE{$d \not= \perp ~\land~ d \le d_{max}$}
	\STATE $p \leftarrow \mathbf{exp}(\Tag,P,P_t,t,d)$
	\OUTPUT $p$
	\STATE $[dl,dr] \leftarrow \mathbf{leafRange}(P,p)$
	\STATE $(d,f,r) \leftarrow \mathbf{rnv}(v,sp,ep,0,dr+1)$
\ENDWHILE
\end{algorithmic}
\end{minipage}
&
\begin{minipage}{0.66\textwidth}
$\mathbf{hdint}(\Tag,P,P_t,t,v,i_1,j_1,i_2,j_2,rng)$
\begin{algorithmic}
\IF{$i_1 > j_1 ~\lor~ i_2 > j_2 ~\lor~ rng = \emptyset$}
	\RET
\ELSIF{$v$ is a leaf}
	\OUTPUT $(label(v),j_1-i_1+1,j_2-i_2+1)$
\ELSE
	\STATE $[dl,dr] \leftarrow \emptyset$,
	       $f_1 \leftarrow 0$, $f_2 \leftarrow 0$
	\STATE $i_{1l} \leftarrow rank_0(B_v,i_1-1)+1$,
	       $j_{1l} \leftarrow rank_0(B_v,j_1)$
	\STATE $i_{1r} \leftarrow i_1-i_{1l}$, $j_{1r} \leftarrow j_1-j_{1r}$
	\STATE $i_{2l} \leftarrow rank_0(B_v,i_2-1)+1$,
	       $j_{2l} \leftarrow rank_0(B_v,j_2)$
	\STATE $i_{2r} \leftarrow i_2-i_{2l}$, $j_{2r} \leftarrow j_2-j_{2r}$
	\IF{$i_{1l} \le j_{1l} ~\land~ i_{2l} \le j_{2l} ~\land~
	    i_{1r} \le j_{1r} ~\land~ i_{2r} \le j_{2r} ~\land~~~~~~~~~~~~$
	    \mbox{$~~~~labels(v_l) \cap rng \not= \emptyset ~\land~
labels(v_r) \cap rng \not= \emptyset$}}
	   \STATE $p_l \leftarrow \mathbf{exp}(\Tag,P,P_t,t,\max labels(v_l))$
	   \STATE $p_r \leftarrow \mathbf{exp}(\Tag,P,P_t,t,\min labels(v_r))$
	   \IF{$p_l = p_r$}
	      \STATE $[dl,dr] \leftarrow \mathbf{leafRange}(P,p_l)$
	      \STATE $f_1 \leftarrow \mathbf{count}(v,i_1,j_1,[dl,dr])$
	      \STATE $f_2 \leftarrow \mathbf{count}(v,i_2,j_2,[dl,dr])$
	   \ENDIF
	\ENDIF
	\STATE $\mathbf{hdint}(\Tag,P,P_t,t,v_l,i_{1l},j_{1l},i_{2l},j_{2l},
				(labels(v_l) \cap rng) - [dl,dr])$
	\IF{$f_1 > 0 ~\land~ f_2 > 0$}
	   	\OUTPUT $(p_l,f_1,f_2)$
	\ENDIF
	\STATE $\mathbf{hdint}(\Tag,P,P_t,t,v_r,i_{1r},j_{1r},i_{2r},j_{2r},
				(labels(v_r) \cap rng) - [dl,dr])$
\ENDIF
\end{algorithmic}
\end{minipage}
\end{tabular}
\end{algorithm}

If we now want to count the number of occurrences of our query $q$ in a
retrievable node $p$, we need to count the number of occurrences of the range
of leaves (i.e., document numbers) below $p$ within the interval $D[sp,ep]$
corresponding to query $q$. Such a range is easily obtained in constant time as
$[dl,dr] = \mathit{leaf\_range}(p)$. Then the result is
$\rc(D,sp,ep,dl,dr)$, as explained.


To carry out document listing restricted to structural elements tagged $t$, we
build on range next value queries. We start with $d_1 = \rnv(D,sp,ep,1)$,
which gives us the smallest (leaf) document number in $D[sp,ep]$. Now we
compute $[dl_1,dr_1]=\expand(t,d_1)$, the range of the lowest node tagged $t$
that contains $d_1$. Then we find the next leaf document using
$d_2 = \rnv(D,sp,ep,dr_1+1)$, and so on. In general,
$d_{i+1} = \rnv(D,sp,ep,dr_i+1)$. Algorithm~\ref{alg:hier} (left) gives
pseudocode. The cost per document retrieved is $\Oh{\log\tau + \log m}$.
However, using the fingered search on $\mathbf{rnv}$ outlined in
Section~\ref{sec:int}, the overall cost reduces to
$\Oh{docc \left(\log\tau + \log\frac{m}{docc}\right)}$.
If we wish to additionally restrict the retrieval to
documents in the range $[d_{min},d_{max}]$, we simply start with
$d_1 = \rnv(D,sp,ep,d_{min})$ and stop when we retrieve a document larger than
$d_{max}$. The cost improves to
$\Oh{docc \left(\log\tau + \log\frac{d_{max}-d_{min}+1}{docc}\right)}$ due to
Lemma~\ref{lem:leafrangeset}.
Complexity returns to $\Oh{docc \log m}$ if we compute also the frequency in
each retrievable unit using $\mathbf{hdfreq}$.

Finally, to carry out intersections restricted to retrievable units,
we follow in principle the same algorithm outlined in
Section~\ref{sec:int}. The difference is that we must not split retrievable
ranges. Therefore, when we are at any wavelet tree node, before going to the
left child that represents the range of symbols $[d_a,d_b]$ and/or to the
right child representing range $[d_b+1,d_c]$, we first find out whether there
is a retrievable unit covering $[d_b,d_b+1]$. To do this we compute
$\expand(t,d_b) = [dl,dr]$ and $\expand(t,d_b+1) = [dl',dr']$. Since the leaves
are consecutive, there are only two possibilities: either $[dl,dr] = [dl',dr']$
or they are disjoint (and $dr=d_b$ and $dl' = d_b+1$). In the latter case we
proceed with the recursion as in Section~\ref{sec:int}. In the former case, we
descend to the left child with document range restricted to $[d_a,dl-1]$, then
report document $[dl,dr]$ if it belongs to the intersection, and finally
descend to the right child with document range restricted to $[dr+1,d_c]$. By
``descending with document range restricted to $[x,y]$'' we mean we abandon
branches whose document range has no intersection with $[x,y]$, and such
restrictions are inherited as we descend.

Algorithm~\ref{alg:hier} (right) gives pseudocode.
The complexity is the same as if the retrievable units were materialized into
consecutive document numbers, that is, $\Oh{\alpha \log\frac{m}{\alpha}}$ under
this interpretation. The only extra cost is the computation of $f_1$ and $f_2$.
Note, however, that these are computed with a {\em \rc} query
restricted to the local subtree, and thus the cost at height $h$ is $\Oh{h}$.
Moreover, this is computed only for nodes of trie $T_\cap$ (recall
Lemma~\ref{lem:adaptive}) having two children, that is, at most $\alpha$ times.
The most expensive case is thus when all those $\alpha$ nodes are as high as
possible in the wavelet tree, in which case the $\Oh{h}$ costs add up to
$\Oh{\alpha\log\frac{m}{\alpha}}$ and do not affect the complexity. Once
again, we can restrict the results to a range $[d_{min},d_{max}]$ with the
usual time improvement.

Other possibilities for marking the retrievable documents can be supported, as
long as one is able to find the lowest retrievable ancestor of any leaf. For
example we could mark retrievable nodes in a bitmap $B[1,2n]$ aligned with
$P$, where we set to 1 the opening and closing parentheses of retrievable
nodes. Then we can compute $\expand(B,i)$ via $rank$ and $select$ operations on
$B$ in constant as follows. We start with $j = \mathit{selectLeaf}(P,i)$, then
$p = select_1(rank_1(B,j))$, then if $P[p] = \mathtt{')'}$ we recompute
$p=parent(P,p)$, and finally $\expand(B,i) = \mathit{leaf\_range}(p)$.
In cases where the retrievable units are defined dynamically, say from previous
parts of the query processing, we can store them in a balanced tree, so that
query $select_1(rank_1(B,j))$ (which is actually a predecessor query) can be
answered in $\Oh{\log n}$ time.

\section{Inverted Lists}
\label{sec:invlists}

Recall $m$ is the total number of documents in the collection and let $\nu$ be the number
of different terms. Let $L_t[1,\df_t]$ be the list of document identifiers where
term $t$ appears, in decreasing weight order (for concreteness we will assume
we store $\tf$ values in the lists as weights, but any weight will do).
Let $n = \sum_t \df_t$ be the total number of occurrences of {\em
distinct} terms in the documents, and $N = \sum_{t,d} \tf_{t,d}$ the total
length, in words, of the text collection (thus $m \le n \le \min(m\nu,N)$).
Finally, let $|q|$ be the number of terms in query $q$.

We propose to concatenate all the lists $L_t$ into a unique sequence $L[1,n]$,
and store for each term $t$ the starting position $s_t$ of list $L_t$ within
$L$. The sequence $L$ of document identifiers is then represented with a wavelet
tree.

The $\tf$ values themselves are stored in differential and run-length
compressed form in a separate sequence.
More precisely, we mark the $v_t$ different $\tf_{t,d}$ values of each list in
a bitmap $T_t[1,m_t]$, where $m_t = \max_d \tf_{t,d}$, and the $v_t$ points in
$L_t[1,\df_t]$ where value $\tf_{d,t}$ changes, in a bitmap $R_t[1,\df_t]$.
Thus one can obtain $\tf_{t,L_t[i]} = select_1(T_t,v_t-rank_1(R_t,i)+1)$.
The $s_t$ sequence is also represented using a bitmap $S[1,n]$ providing
$rank$/$select$ operations. Thus we can recover $s_t = select_1(S,t)$, and also
$rank_1(S,i)$ tells us which list $L[i]$ belongs to.

Let us analyze the space required by our representation.
According to Lemma~\ref{lem:wt}, the wavelet tree of $L$ occupies is
$n\log m + \Oh{n}$ bits.
The classical encoding of inverted files, when documents are sorted by
increasing document identifier, records the consecutive differences using
$\delta$-codes \cite{WMB99}. This needs at most
$\sum_t \df_t \log \frac{m}{\df_t} \le n\log\frac{m\nu}{n}$ bits plus
lower-order terms, which is asymptotically less than our space. If, however,
the lists are sorted by decreasing $\tf$ values, then differential encoding
can only be used on some parts of the lists. Yet, $n\log m$ (plus lower-order
terms) is still an upper bound to the space required to list the documents. As
can be seen, no inverted index representation takes more space than our wavelet
tree. However, it must be remembered that our wavelet tree will offer the combined
functionality of {\em both} inverted indexes, and more.

We also store the $\tf$ and the $s_t$ values. The former is encoded
with $T_t$ and $R_t$. We use Okanohara and Sadakane's representation
\cite{OS07} for $T_t$ and P{\v{a}}tra\c{s}cu's \cite{Pat08} for $R_t$ (see
Section~\ref{sec:wt}), to achieve total space $v_t\log\frac{m_t}{v_t}+\Oh{v_t}+
v_t\log\frac{\df_t}{v_t} + o(\df_t)$ bits and retain constant time access to
$\tf$ values. This space is similar to that needed to represent, in a
traditional $\tf$-sorted index, each new $\tf_{t,d}$ value and the number of
entries that share it. The $s_t$ values require $\nu\log\frac{n}{\nu} + o(n)$
bits using again P{\v{a}}tra\c{s}cu \cite{Pat08}, which gives constant-time
access to $s_t$ and requires less space than the usual pointers from the
vocabulary to the list of each term.
Overall our data structure takes at most $n \log (m\nu) + \Oh{n}$ bits.

We will now consider the classical and extended operations that can be carried
out with our data structure. In particular we will show how to give some
support for hierarchical document retrieval (as already seen for general
documents) and for {\em stemmed} searches without using any extra space.
One common way to support stemming is by coalescing terms having the
same root at index construction time. However, the index is then unable to
provide non-stemmed searching.
One can of course index the stemmed and non-stemmed
occurrence of each term, but this costs space. Our method can provide
both types of search without using any extra space provided all the
variants of the same stemmed word be contiguous in the vocabulary (this is in
many cases automatic as stemmed terms share the same root, or prefix).

\subsection{Full-Text Retrieval}

The full-text index, rather than $L_t$, requires a list $F_t$, where the same
documents are sorted by increasing document identifier.
Different kinds of access operations need to be carried out on $F_t$. We now show
how all these can be supported in $\Oh{\log m}$ time or less.

\subsubsection{Direct retrieval}

First, with our wavelet tree representation of $L$ we can compute any specific
value $F_t[k]$ in time $\Oh{\log m}$. This is equivalent to finding the $k$th
smallest value in $L[s_t,s_{t+1}-1]$, that is, query $\rqq(L,s_t,s_{t+1}-1,k)$
described in Section~\ref{sec:rqq}.

We can also extract any segment $F_t[k,k']$, in order, in time $\Oh{(k'-k+1)
\log\frac{m}{k'-k+1}}$, that is, faster per document as we extract more
documents. The algorithm is as for $\rqq$ on quantiles $k$ to $k'$
simultaneously, going just by one branch when both $k$ and $k'$ choose the
same branch, and splitting the interval into two separate searches when they do not.
We arrive at $k'-k+1$ leaves of the wavelet tree, thus the cost follows from
Lemma~\ref{lem:nodeset}.

Another useful operation is {\em fingered search}, that is, to find $F_t[k']$
after having visited $F_t[k]$, for some $k'>k$.
This is slightly more complex than for consecutive range next value
queries. We need to store $\log m$ values $m_\delta$, $e_\delta$ and
$v^\delta$, where $m_0 = \infty$ and $e_1 = 0$, and the others are computed as
follows when we obtain $F_t[k]$: at wavelet tree node $v$ of depth $\delta$
(the root being depth 1) we set $v^\delta \leftarrow v$ and, if we must go
to the left child, then we set $m_\delta \leftarrow e_\delta+n_l$ and
$e_{\delta+1} \leftarrow e_\delta$; else we set $m_\delta \leftarrow
m_{\delta-1}$ and $e_{\delta+1} \leftarrow e_\delta+n_l$.
Here $n_l$ is the value local to the node (recall \textbf{rqq} in
Algorithm~\ref{alg:rqqrnvint}). Therefore $e_\delta$ counts the values
skipped to the left, and $m_\delta$ is the maximum $k'$ value such that the
downward paths to compute $F_t[k]$ and $F_t[k']$ coincide up to depth $\delta$.
Now, to compute $F_t[k']$, we consider all the $\delta$ values, from largest to
smallest, until finding the first one such that $k' \le m_\delta$. From
there on we recompute the downward path, resetting $m_\delta$, $e_\delta$,
and $v^\delta$ accordingly.

If we carry out this operation $r$ times, across a range $[k,k']$, the cost is
$\Oh{\log m + r\log\frac{k'-k+1}{r}}$ by Lemma~\ref{lem:leafrangeset}.
Algorithm~\ref{alg:rqqext} depicts the new extended variants of \textbf{rqq}.

\begin{algorithm}[t]
\caption{Extended variants of range quantile algorithms:
$\mathbf{mrqq}(v_{root},i,j,k,k')$ outputs all the (distinct) values
$\rqq(S,i,j,k)$ to $\rqq(S,i,j,k')$, with their frequencies, on the wavelet
tree of sequence $S$, assuming $k'\le j-i+1$;
$\mathbf{frqq1}(v_{root},i,j,k)$ returns the same as
$\mathbf{rqq}(v_{root},i,j,k)$ but prepares the iterator for subsequent
fingered searches; those are carried out by calling
$\mathbf{frqq1}(v_{root},k)$, where it is assumed that the $k$ values increase
at each call; $\mathbf{frqq'}$ is the recursive procedure that reprocesses the
needed part of the path.}
\label{alg:rqqext}
\begin{tabular}{ccc}
\begin{minipage}{0.36\textwidth}
$\mathbf{mrqq}(v,i,j,k,k')$
\begin{algorithmic}
\IF{$v$ is a leaf}
	\OUTPUT $(label(v),j-i+1)$
\ELSE
	\STATE $i_l \leftarrow rank_0(B_v,i-1)+1$
	\STATE $j_l \leftarrow rank_0(B_v,j)$
	\STATE $i_r \leftarrow i-i_l$, $j_r \leftarrow j-j_r$
	\STATE $n_l \leftarrow j_l-i_l+1$
	\IF{$k \le n_l$}
		\STATE $\mathbf{mrqq}(v_l,i_l,j_l,k,\min(n_l,k'))$
	\ENDIF
	\IF{$k' > n_l$}
		\STATE $\mathbf{mrqq}(v_r,i_r,j_r,\max(k-n_l,1),k')$
	\ENDIF
\ENDIF
\STATE
\STATE
\STATE
\end{algorithmic}
\end{minipage}
&
\begin{minipage}{0.28\textwidth}
$\mathbf{frqq1}(v,i,j,k)$
\home{\up}
\begin{algorithmic}
\STATE $m_0 \leftarrow \infty$
\STATE $e_1 \leftarrow v$
\STATE $i^* \leftarrow i$
\STATE $j^* \leftarrow j$
\RET $\mathbf{frrq'}(v,i,j,k,1)$
\STATE
\end{algorithmic}

$\mathbf{frqq}(v,k)$
\home{\up}
\begin{algorithmic}
\STATE $\delta \leftarrow \textrm{height of }v$
\WHILE{$k > m_{\delta-1}$}
	\STATE $\delta \leftarrow \delta-1$
\ENDWHILE
\RET $\mathbf{frqq'}(v^\delta,i^*,j^*,k,\delta)$
\STATE
\STATE
\STATE
\STATE
\home{\STATE}
\end{algorithmic}
\end{minipage}
&
\begin{minipage}{0.36\textwidth}
$\mathbf{frqq'}(v,i,j,k,\delta)$
\begin{algorithmic}
\IF{$v$ is a leaf}
	\OUTPUT $(label(v),j-i+1)$
\ELSE
	\STATE $v^\delta \leftarrow v$
	\STATE $i_l \leftarrow rank_0(B_v,i-1)+1$
	\STATE $j_l \leftarrow rank_0(B_v,j)$
	\STATE $i_r \leftarrow i-i_l$, $j_r \leftarrow j-j_r$,
	       $n_l \leftarrow j_l-i_l+1$
	\IF{$k \le n_l$}
		\STATE $m_\delta \leftarrow e_\delta+n_l$
		\STATE $e_{\delta+1} \leftarrow e_\delta$
		\RET $\mathbf{frqq'}(v_l,i_l,j_l,k,\delta+1)$
	\ELSE
		\STATE $m_\delta \leftarrow m_{\delta-1}$
		\STATE $e_{\delta+1} \leftarrow e_\delta + n_l$
		\RET $\mathbf{frrq'}(v_r,i_r,j_r,k,\delta+1)$
	\ENDIF
\ENDIF
\end{algorithmic}
\end{minipage}
\end{tabular}
\end{algorithm}

\subsubsection{Intersection algorithms}

The most important operation in the various list intersection algorithms
described in the literature is to find the first $k$ such that $F_t[k] \ge d$, given $d$.
This is usually solved with a combination of sampling and linear, exponential,
or binary search. In our case, this operation takes time $\Oh{\log m}$ with query
$\rnv(L,s_t,s_{t+1}-1,d)$ described in Section~\ref{sec:rnv}. Our time
complexity is not far from the $\Oh{\log (s_{t+1}-s_t)}$ of traditional
approaches. Moreover, as explained in Section~\ref{sec:int}, we can use
fingered searches on $\mathbf{rnv}$ to achieve time
$\Oh{\log m + r \log \frac{m}{r}}$ for $r$ accesses. Furthermore, if all the
accesses are for documents in a range $[d,d']$ then, by
Lemma~\ref{lem:leafrangeset}, the cost will be
$\Oh{\log m + r \log \frac{d'-d+1}{r}}$ time. This is indeed the
time required by $r$ successive searches using exponential search.

Finally, we can intersect the lists $F_t$ and $F_{t'}$ using
$\rint(L,s_t,s_{t+1}-1,s_{t'},s_{t'+1}-1)$, in adaptive time
$\Oh{\alpha \log\frac{m}{\alpha}}$ --- recall Section~\ref{sec:int}.
As explained, this can be extended to intersecting $k$ terms simultaneously,
and to report documents where a minimum number of the terms appear.

\subsubsection{Other operations of interest}

If the range of terms $[t,t']$ represent the derivatives of a single stemmed
root, we might wish to act as if we had a single list $F_{t,t'}$ containing
all the documents where they occur.
Indeed, if we apply our previous algorithm to obtain $F_t[k]$ from
$L[s_t,s_{t+1}-1]$, on the range $L[s_t,s_{t'+1}-1]$, we obtain precisely
$F_{t,t'}[k]$, if we understand that a document $d$ may repeat several times
in the list if different terms in $[t,t']$ appear in $d$. Still we can obtain
the list of $docc$ distinct documents for a range of terms $[t,t']$ with
exactly the same method as for the $D$ array, described at the beginning
of Section~\ref{sec:doclist}, in time $\Oh{docc\log\frac{m}{docc}}$.

Furthermore, the algorithms to find the first $k$ such that $F_t[k] \ge d$,
can be applied verbatim to obtain the same result for $F_{t,t'}[k] \ge d$.
All the variants of these queries are directly supported as well.
Our intersection algorithm can also be applied verbatim in order to
intersect stemmed terms.

Additionally, note that we can compute some {\em summarization} information.
More precisely, we can obtain the {\em local vocabulary} of a document $d$,
that is, the set of different terms that appear in $d$. By executing
$rank_1(S,select_d(L,i))$ for successive $i$ values, we obtain all the local
vocabulary, in order, and in time $\Oh{\log m}$ per term.
This allows, for example, merging the vocabularies of
different documents, or binary searching for a particular term in a particular
document (yet, the latter is easier via two $rank$ operations on $L$:
$rank_d(L,s_{t+1}-1)-rank_d(L,s_t-1)$; then the corresponding position can be
obtained by $select_d(L,1+rank_d(L,s_t-1))$).

Finally, the data structure provides some basic support for temporal and hierarchical
documents, by restricting the inverted lists $F_t$ to a range of document
values $[d_{min},d_{max}]$ (recall Section~\ref{sec:temphier}).
A simple way to proceed is to first carry out a
query $\rnv(L,s_t,s_{t+1}-1,d_{min})$ with $\mathbf{rnv}$
(Algorithm~\ref{alg:rqqrnvint}), which will also give us the rank $p$ of the
first document $\ge d$. Then any subsequent range quantile query on $F_t$ must
increase its argument by $p-1$, and discard answers larger than $d_{max}$. On
the other hand, functions $\mathbf{hdlist}$ and $\mathbf{hdint}$
(Algorithm~\ref{alg:hier}) will work without changes, and support inverted
list algorithms on XML retrievable units, just as in Section~\ref{sec:units}.

\subsection{Ranked Retrieval}

We focus now on the operations of interest for ranked retrieval, which are
also simulated in $\Oh{\log m}$ time or less.

\subsubsection{Direct access and Persin's algorithm}

The $L_t$ lists used for ranked retrieval are directly concatenated in $L$, so
$L_t[i]$ is obtained by accessing symbol $L[s_t+i-1]$ using the wavelet tree.
Recall that the term frequencies $\tf$ are available in constant time.
A range $L_t[i,i']$ is obtained in time $\Oh{(i'-i+1)\log\frac{m}{i'-i+1}}$
by using query $\rr(L,s_t+i,s_t+i',[1,m])$ (Algorithm~\ref{alg:range}).

This algorithm has the problem of retrieving the documents
in document order, not in $\tf$ order as they are in $L_t$. Note, however, that
retrieving the highest-$\tf$ documents in document order is indeed beneficial
for Persin's algorithm \cite{PZSD96} (recall Section~\ref{sec:invind}), where a
problem is how to accumulate results across unordered document sets.
More precisely, assume we have the current candidate set as an array ordered
by increasing document identifier. Persin's algorithm computes a threshold
term frequency $f$, so that the next list to consider, $L_t$, should be
processed only for $\tf$ values that are at least $p$. Instead of traversing
$L_t$ by decreasing $\tf$ values and stopping when these fall below $f$, we
can compute $p = select_1(R_t,v_t-rank_1(T_t,f)+1)-1$, so that $L_t[1,p]$ is
precisely the prefix where the term frequencies are at least $f$. Now we
extract all the values as explained. As they are obtained in increasing
document identifier order, they are easily merged with the current candidate
set, in order to accumulate frequencies in common documents.


\subsubsection{Other operations of interest}

Any candidate document $d$ in Persin's algorithm can be directly evaluated,
obtaining its $\tf_{d,t}$ values, by finding $d$ within $L_t$ for each $t \in q$
(with $rank_d$ and $select_d$ on $L$, as explained), and its $\tf$ obtained
from $R_t$ and $T_t$, all in $\Oh{|q|\log m}$ time.

If we use stemming, we might want to retrieve prefixes of several
lists $L_t$ to $L_{t'}$. We may carry out the previous algorithm to deliver
all the distinct documents in these prefixes, now carrying on the $t'-t+1$
intervals as we descend in the wavelet tree. When we arrive at the relevant
leaves labeled $d$, the corresponding positions will be contiguous, thus we
can naturally return just one occurrence of each $d$ in the union.
If we wish to obtain the sum of the $\tf$ values for all the stemmed terms in
$d$, we can traverse the wavelet tree upwards for each interval element at
leaf $d$, and obtain its $\tf$ upon finding its position in $L$.
Alternatively, we could store the $\tf$ values aligned to the leaves and
mark their cumulative values on a compressed bitmap, so as to obtain the
sum in constant time as the difference of two $select_1$ operations on that
bitmap. The space for $\tf$, however, becomes now $n\log\frac{N}{n} + \Oh{n}$
bits, which is higher than in our current representation.
This method also delivers the results in document order.

Maintaining the $\tf$ values aligned to the leaf order yields some support for
hierarchical queries. Assume a retrievable unit (recall Section~\ref{sec:units})
spans the document range $[dl,dr]$, and thus we wish to compute the total $\tf$
of $t$ in range $[dl,df]$. Any such range is exactly covered by $\Oh{\log m}$
wavelet tree nodes (Lemma~\ref{lem:cover}). We can descend, projecting
the range of $L_t$ in $L$, until those nodes, and then add up the
accumulated $\tf$ values of those $\Oh{\log m}$ nodes, in overall time
$\Oh{\log m}$.

We can also support temporal and hierarchical documents by restricting our
accesses in $L_t$ only to documents within a range $[d_{min},d_{max}]$ (recall
Section~\ref{sec:temphier}). It is sufficient to use $[d_{min},d_{max}]$ as
the last argument when we use the $\rr$ query that underlies our support for
accessing $L_t$. This automatically yields, for example, Persin's
algorithm restricted to a range of documents.

\section{Conclusions}
\label{sec:concl}

The wavelet tree data structure \cite{GGV03} has had an enormous impact on the
implementation of reduced-space text databases. In this article we have shown
that it has several other under-explored capabilities. We have proposed
three new algorithms on wavelet trees that solve fundamental problems,
improving upon the state of the art in some aspects. For range intersections we
achieve an adaptive complexity that matches the one achieved for sorted
ranges. For range quantile and range next value problems, we match or get
close to the best known time complexities while using less space: basically
that needed to represent the sequence $S[1,n]$ plus $\Oh{n}$ extra bits, versus
the $\Oh{n\log n}$ extra bits required by previous solutions. Furthermore, if
we use compressed bitmap representations \cite{rrr2002} in our wavelet trees, we
retain the time complexities and achieve zero-order compression in the
representation of $S$ \cite{GGV03}, that is, our overall space including the
sequence becomes $nH_0(S) + \Oh{n+\sigma}$, where $[1,\sigma]$ is the alphabet
of $S$ and $H_0(S)$ is its empirical zero-order entropy.

We have also explored a number of applications of those novel algorithms to
two areas of Information Retrieval (IR): document retrieval on general string
databases, and inverted indexes. In both cases we obtained support for a
number of powerful operations without further increasing the space required
to support basic ones.

The algorithms are elegant and simple to implement, so they have the potential
to be useful in practice. Future work involves implementing them within an IR
framework and evaluating their practical performance. Although we have used
some theoretical data structures for handling bitmaps within convenient space
bounds, practical variants of $rank$/$select$-capable
plain and compressed bitmaps, as well as various wavelet tree implementations,
are publicly available\footnote{See for example {\tt http://www.recoded.cl}.}.
Some preliminary experiments \cite{CNPT10} show that an early version of our
results \cite{GPT09} do improve significantly in practice upon the previous state of the art
on document retrieval for general strings. Our improved versions presented in
this article should widen the gap. In the case of inverted indexes we do not
expect our representation to be faster for the basic operations, yet it is
likely that it requires less space than that of a full-text plus a
ranked-retrieval inverted index, and that it is more efficient on
sophisticated operations.



\paragraph*{Aknowledgements.}
We thank J\'er\'emy Barbay for his help in understanding the adaptive
complexity measures for intersections, and Meg Gagie for righting our
grammar.

\bibliographystyle{plain}
\bibliography{wavelet}

\begin{thebibliography}{10}

\bibitem{AM06}
V.~Anh and A.~Moffat.
\newblock Pruned query evaluation using pre-computed impacts.
\newblock In {\em Proc. 29th Annual International ACM SIGIR Conference on
  Research and Development in Information Retrieval (SIGIR)}, pages 372--379,
  2006.

\bibitem{ACMNNSV10}
D.~Arroyuelo, F.~Claude, S.~Maneth, V.~M{\"a}kinen, G.~Navarro,
  K.~Nguy{$\tilde{\hat{\textrm{e}}}$}n, J.~Sir{\'e}n, and N.~V{\"a}lim{\"a}ki.
\newblock Fast in-memory {XPath} search over compressed text and tree indexes.
\newblock In {\em Proc. 26th IEEE International Conference on Data Engineering
  (ICDE)}, pages 417--428, 2010.

\bibitem{BY04}
R.~Baeza-Yates.
\newblock A fast set intersection algorithm for sorted sequences.
\newblock In {\em Proc. 15th Annual Symposium on Combinatorial Pattern Matching
  (CPM)}, LNCS 3109, pages 400--408, 2004.

\bibitem{BYMN02}
R.~Baeza-Yates, A.~Moffat, and G.~Navarro.
\newblock {\em Searching Large Text Collections}, pages 195--244.
\newblock Kluwer Academic Publishers, 2002.

\bibitem{BYRN99}
R.~Baeza-Yates and B.~Ribeiro.
\newblock {\em Modern Information Retrieval}.
\newblock Addison-Wesley, 1999.

\bibitem{BYS05}
R.~Baeza-Yates and A.~Salinger.
\newblock Experimental analysis of a fast intersection algorithm for sorted
  sequences.
\newblock In {\em Proc. 12th International Symposium on String Processing and
  Information Retrieval (SPIRE)}, LNCS 3772, pages 13--24, 2005.

\bibitem{BCN10}
J.~Barbay, F.~Claude, and G.~Navarro.
\newblock Compact rich-functional binary relation representations.
\newblock In {\em {Proc. 9th Latin American Symposium on Theoretical
  Informatics (LATIN)}}, LNCS 6034, pages 172--185, 2010.

\bibitem{BK02}
J.~Barbay and C.~Kenyon.
\newblock Adaptive intersection and $t$-threshold problems.
\newblock In {\em Proc. 13th Annual ACM-SIAM Symposium on Discrete Algorithms
  (SODA)}, pages 390--399, 2002.

\bibitem{BK08}
J.~Barbay and C.~Kenyon.
\newblock Alternation and redundancy analysis of the intersection problem.
\newblock {\em ACM Transactions on Algorithms}, 4(1), 2008.

\bibitem{BLOL06}
J.~Barbay, A.~L\'opez-Ortiz, and T.~Lu.
\newblock Faster adaptive set intersections for text searching.
\newblock In {\em Proc. 5th International Workshop on Experimental Algorithms
  (WEA)}, LNCS 4007, pages 146--157, 2006.

\bibitem{BLOLS09}
J.~Barbay, A.~L{\'o}pez-Ortiz, T.~Lu, and A.~Salinger.
\newblock An experimental investigation of set intersection algorithms for text
  searching.
\newblock {\em ACM Journal of Experimental Algorithmics}, 14(3):article 7,
  2009.

\bibitem{BN09}
J.~Barbay and G.~Navarro.
\newblock Compressed representations of permutations, and applications.
\newblock In {\em Proc. 26th International Symposium on Theoretical Aspects of
  Computer Science (STACS)}, pages 111--122, 2009.

\bibitem{BFP+73}
M.~Blum, R.~W. Floyd, V.~R. Pratt, R.~L. Rivest, and R.~E. Tarjan.
\newblock Time bounds for selection.
\newblock {\em Journal of Computer and System Sciences}, 7(4):448--461, 1973.

\bibitem{BHMM09}
P.~Bose, M.~He, A.~Maheshwari, and P.~Morin.
\newblock Succinct orthogonal range search structures on a grid with
  applications to text indexing.
\newblock In {\em Proc. 11th International Symposium on Algorithms and Data
  Structures (WADS)}, pages 98--109, 2009.

\bibitem{BKMT05}
P.~Bose, E.~Kranakis, P.~Morin, and Y.~Tang.
\newblock Approximate range mode and range median queries.
\newblock In {\em Proc. 22nd Symposium on Theoretical Aspects of Computer
  Science (STACS)}, pages 377--388, 2005.

\bibitem{BLNS10}
N.~Brisaboa, M.~Luaces, G.~Navarro, and D.~Seco.
\newblock A fun application of compact data structures to indexing geographic
  data.
\newblock In {\em Proc. 5th International Conference on Fun with Algorithms
  (FUN)}, LNCS 6099, pages 77--88, 2010.

\bibitem{BGJS??}
G.~S. Brodal, B.~Gfeller, A.~G. J{\o}rgensen, and P.~Sanders.
\newblock Towards optimal range medians.
\newblock {\em Theoretical Computer Science}, to appear.

\bibitem{BJ09}
G.~S. Brodal and A.~G. J{\o}rgensen.
\newblock Data structures for range median queries.
\newblock In {\em Proc. 20th International Symposium on Algorithms and
  Computation (ISAAC)}, LNCS 5878, pages 822--831, 2009.

\bibitem{Cha88}
B.~Chazelle.
\newblock A functional approach to data structures and its use in
  multidimensional searching.
\newblock {\em SIAM Journal on Computing}, 17(3):427--462, 1988.

\bibitem{chsv2008}
Y.-F. Chien, W.-K. Hon, R.~Shah, and J.~S. Vitter.
\newblock Geometric {B}urrows-{W}heeler transform: Linking range searching and
  text indexing.
\newblock In {\em Proc. Data Compression Conference (DCC)}, pages 252--261,
  2008.

\bibitem{CN09}
F.~Claude and G.~Navarro.
\newblock Self-indexed text compression using straight-line programs.
\newblock In {\em Proc. 34th International Symposium on Mathematical
  Foundations of Computer Science (MFCS)}, LNCS 5734, pages 235--246, 2009.

\bibitem{CIKRW08}
M.~Crochemore, C.~S. Iliopoulos, M.~Kubica, M.~Rahman, and T.~Walen.
\newblock Improved algorithms for the range next value problem and
  applications.
\newblock In {\em Proc. 25th Symposium on Theoretical Aspects of Computer
  Science (STACS)}, pages 205--216, 2008.

\bibitem{CIR07}
M.~Crochemore, C.~S. Iliopoulos, and M.~Rahman.
\newblock Finding patterns in given intervals.
\newblock In {\em Proc. 32nd International Symposium on Mathematical
  Foundations of Computer Science (MFCS)}, LNCS 4708, pages 645--656, 2007.

\bibitem{CM07}
J.~S. Culpepper and A.~Moffat.
\newblock Compact set representation for information retrieval.
\newblock In {\em Proc. 14th International Symposium on String Processing and
  Information Retrieval (SPIRE)}, LNCS 4726, pages 137--148, 2007.

\bibitem{CNPT10}
J.~S. Culpepper, G.~Navarro, S.~J. Puglisi, and A.~Turpin.
\newblock Top-\textit{k} ranked document search in general text databases.
\newblock In {\em Proc. 18th Annual European Symposium on Algorithms (ESA)},
  LNCS 6347, pages 194--205 (part II), 2010.

\bibitem{DM00}
E.~Demaine and I.~Munro.
\newblock Adaptive set intersections, unions, and differences.
\newblock In {\em Proc. 11th Annual ACM-SIAM Symposium on Discrete Algorithms
  (SODA)}, pages 743--752, 2000.

\bibitem{FMMN04}
P.~Ferragina, G.~Manzini, V.~M{\"a}kinen, and G.~Navarro.
\newblock An alphabet-friendly {FM}-index.
\newblock In {\em Proc. 11th International Symposium on String Processing and
  Information Retrieval (SPIRE)}, LNCS 3246, pages 150--160, 2004.

\bibitem{fmmn2007}
P.~Ferragina, G.~Manzini, V.~M{\"a}kinen, and G.~Navarro.
\newblock Compressed representations of sequences and full-text indexes.
\newblock {\em ACM Transactions on Algorithms}, 3(2):article 20, 2007.

\bibitem{FH07}
J.~Fischer and V.~Heun.
\newblock A new succinct representation of {RMQ}-information and improvements
  in the enhanced suffix array.
\newblock In {\em Proc. 1st ESCAPE}, LNCS 4614, pages 459--470, 2007.

\bibitem{GBT84}
H.~Gabow, J.~Bentley, and R.~Tarjan.
\newblock Scaling and related techniques for geometry problems.
\newblock In {\em Proc. 16 ACM Symposium on Theory of Computing (STOC)}, pages
  135--143, 1984.

\bibitem{GNP10}
T.~Gagie, G.~Navarro, and S.~J. Puglisi.
\newblock Colored range queries and document retrieval.
\newblock In {\em Proc. 17th International Symposium on String Processing and
  Information Retrieval (SPIRE)}, LNCS 6393, pages 67--81, 2010.

\bibitem{GPT09}
T.~Gagie, S.J. Puglisi, and A.~Turpin.
\newblock Range quantile queries: another virtue of wavelet trees.
\newblock In {\em Proc. 16th International Symposium on String Processing and
  Information Retrieval (SPIRE)}, LNCS 5721, pages 1--6, 2009.

\bibitem{GS??}
B.~Gfeller and P.~Sanders.
\newblock Towards optimal range medians.
\newblock In {\em Proc. 36th International Colloquium on Automata, Languages
  and Programming, (ICALP)}, LNCS 5555, pages 475--486, 2009.

\bibitem{GGV03}
R.~Grossi, A.~Gupta, and J.~S. Vitter.
\newblock High-order entropy-compressed text indexes.
\newblock In {\em Proc. 14th Symposium on Discrete Algorithms (SODA)}, pages
  841--850, 2003.

\bibitem{HM08}
S.~{Har-Peled} and S.~Muthukrishnan.
\newblock Range medians.
\newblock In {\em Proc. 16th European Symposium on Algorithms (ESA)}, LNCS
  5193, pages 503--514, 2008.

\bibitem{HM05}
D.~Hiemstra and V.~Mihajlovi{\'c}.
\newblock The simplest evaluation measures for {XML} information retrieval that
  could possibly work.
\newblock In {\em Proc. INEX Workshop on Element Retrieval Methodology}, 2005.

\bibitem{HSTV10}
W.-K. Hon, R.~Shah, S.~Thankachan, and J.~S. Vitter.
\newblock String retrieval for multi-pattern queries.
\newblock In {\em Proc. 17th International Symposium on String Processing and
  Information Retrieval (SPIRE)}, LNCS 6393, pages 55--66, 2010.

\bibitem{Jac89}
G.~Jacobson.
\newblock Space-efficient static trees and graphs.
\newblock In {\em Proc. 30th Symposium on Foundations of Computer Science
  (FOCS)}, pages 549--554, 1989.

\bibitem{JL??}
A.~G. J{\o}rgensen and K.~D. Larsen.
\newblock Range selection and median: Tight cell probe lower bounds and
  adaptive data structures.
\newblock In {\em Proc. 22nd Symposium on Discrete Algorithms (SODA)}, 2011.
\newblock To appear.

\bibitem{KKL07}
O.~Keller, T.~Kopelowitz, and M.~Lewenstein.
\newblock Range non-overlapping indexing and successive list indexing.
\newblock In {\em Proc. 10th International Workshop on Algorithms and Data
  Structures (WADS)}, LNCS 4619, pages 625--636, 2007.

\bibitem{KMS05}
D.~Krizanc, P.~Morin, and M.~H.~M. Smid.
\newblock Range mode and range median queries on lists and trees.
\newblock {\em Nordic Journal of Computing}, 12(1):1--17, 2005.

\bibitem{Lal09}
M.~Lalmas.
\newblock {\em {XML} Retrieval}, volume~1.
\newblock Morgan \& Claypool Publishers, 2009.

\bibitem{MN05}
V.~M{\"a}kinen and G.~Navarro.
\newblock Succinct suffix arrays based on run-length encoding.
\newblock {\em Nordic Journal of Computing}, 12(1):40--66, 2005.

\bibitem{MN06}
V.~M{\"a}kinen and G.~Navarro.
\newblock Position-restricted substring searching.
\newblock In {\em {Proc. 7th Latin American Symposium on Theoretical
  Informatics (LATIN)}}, LNCS 3887, pages 703--714, 2006.

\bibitem{MN07implicit}
V.~M{\"a}kinen and G.~Navarro.
\newblock Implicit compression boosting with applications to self-indexing.
\newblock In {\em Proc. 14th International Symposium on String Processing and
  Information Retrieval (SPIRE)}, LNCS 4726, pages 214--226, 2007.

\bibitem{MNU05}
V.~M{\"a}kinen, G.~Navarro, and E.~Ukkonen.
\newblock Transposition invariant string matching.
\newblock {\em Journal of Algorithms}, 56(2):124--153, 2005.

\bibitem{MM93}
U.~Manber and G.~Myers.
\newblock Suffix arrays: a new method for on-line string searches.
\newblock {\em SIAM Journal on Computing}, 22(5):935--948, 1993.

\bibitem{Mun96}
I.~Munro.
\newblock Tables.
\newblock In {\em Proc. 16th Conference on Foundations of Software Technology
  and Theoretical Computer Science (FSTTCS)}, LNCS 1180, pages 37--42, 1996.

\bibitem{Mut02}
S.~Muthukrishnan.
\newblock Efficient algorithms for document retrieval problems.
\newblock In {\em Proc 13th Annual ACM-SIAM Symposium on Discrete Algorithms
  (SODA)}, pages 657--666, 2002.

\bibitem{Nav03}
G.~Navarro.
\newblock Indexing text using the ziv-lempel trie.
\newblock {\em Journal of Discrete Algorithms}, 2(1):87--114, 2004.

\bibitem{nm2007}
G.~Navarro and V.~M{\"a}kinen.
\newblock Compressed full text indexes.
\newblock {\em ACM Computing Surveys}, 39(1):article 2, 2007.

\bibitem{NMNZBY00}
G.~Navarro, E.~Moura, M.~Neubert, N.~Ziviani, and R.~Baeza-Yates.
\newblock Adding compression to block addressing inverted indexes.
\newblock {\em Information Retrieval}, 3(1):49--77, 2000.

\bibitem{NP10}
G.~Navarro and S.~J. Puglisi.
\newblock Dual-sorted inverted lists.
\newblock In {\em Proc. 17th International Symposium on String Processing and
  Information Retrieval (SPIRE)}, LNCS 6393, pages 310--322, 2010.

\bibitem{OS07}
D.~Okanohara and K.~Sadakane.
\newblock Practical entropy-compressed rank/select dictionary.
\newblock In {\em Proc. 9th Workshop on Algorithm Engineering and Experiments
  (ALENEX)}, 2007.

\bibitem{Pat08}
M.~P{\v{a}}tra\c{s}cu.
\newblock Succincter.
\newblock In {\em Proc. 49th IEEE Annual Symposium on Foundations of Computer
  Science (FOCS)}, pages 305--313, 2008.

\bibitem{Pec06}
J.~Pehcevski.
\newblock {\em Evaluation of Effective {XML} Information Retrieval}.
\newblock PhD thesis, RMIT University, Australia, 2006.

\bibitem{PZSD96}
M.~Persin, J.~Zobel, and R.~Sacks-{D}avis.
\newblock Filtered document retrieval with frequency-sorted indexes.
\newblock {\em Journal of the American Society for Information Sicence},
  47(10):749--764, 1996.

\bibitem{Pet08}
H.~Petersen.
\newblock Improved bounds for range mode and range median queries.
\newblock In {\em Proc. 34th Conference on Current Trends in Theory and
  Practice of Computer Science (SOFSEM)}, LNCS 4910, pages 418--423, 2008.

\bibitem{PG09}
H.~Petersen and S.~Grabowski.
\newblock Range mode and range median queries in constant time and
  sub-quadratic space.
\newblock {\em Information Processing Letters}, 109(4):225--228, 2009.

\bibitem{rrr2002}
R.~Raman, V.~Raman, and S.~Srinivasa Rao.
\newblock Succinct indexable dictionaries with applications to encoding k-ary
  trees and multisets.
\newblock In {\em Proc. 13th Annual ACM-SIAM Symposium on Discrete Algorithms
  (SODA)}, pages 233--242, 2002.

\bibitem{Sad07}
K.~Sadakane.
\newblock Succinct data structures for flexible text retrieval systems.
\newblock {\em Journal of Discrete Algorithms}, 5(1):12--22, 2007.

\bibitem{SN10}
K.~Sadakane and G.~Navarro.
\newblock Fully-functional succinct trees.
\newblock In {\em Proc. 21st Annual ACM-SIAM Symposium on Discrete Algorithms
  (SODA)}, pages 134--149, 2010.

\bibitem{ST07}
P.~Sanders and F.~Transier.
\newblock Intersection in integer inverted indices.
\newblock In {\em Proc. 9th Workshop on Algorithm Engineering and Experiments
  (ALENEX)}, 2007.

\bibitem{SGB10}
S.~Stolinski, Sz. Grabowski, and W.~Bieniecki.
\newblock On efficient implementations of median filters in theory and
  practice.
\newblock Unpublished manuscript, 2010.

\bibitem{SC07}
T.~Strohman and B.~Croft.
\newblock Efficient document retrieval in main memory.
\newblock In {\em Proc. 30th Annual International {ACM SIGIR} Conference on
  Research and Development in Information Retrieval (SIGIR)}, pages 175--182,
  2007.

\bibitem{vm2007}
N.~V{\"a}lim{\"a}ki and V.~M{\"a}kinen.
\newblock Space-efficient algorithms for document retrieval.
\newblock In {\em Proc. 18th Annual Symposium on Combinatorial Pattern Matching
  (CPM)}, LNCS 4580, pages 205--215, 2007.

\bibitem{WMB99}
I.~Witten, A.~Moffat, and T.~Bell.
\newblock {\em Managing Gigabytes}.
\newblock Morgan Kaufmann Publishers, 2nd edition, 1999.

\bibitem{Zip49}
G.~Zipf.
\newblock {\em Human Behaviour and the Principle of Least Effort}.
\newblock Addison-Wesley, 1949.

\bibitem{ZM06}
J.~Zobel and A.~Moffat.
\newblock Inverted files for text search engines.
\newblock {\em ACM Computing Surveys}, 38(2):art.\ 6, 2006.

\end{thebibliography}

\end{document}